\begin{document}
\title{Byzantine Generals in the Permissionless Setting}
%
%


%
\author{Andrew Lewis-Pye \inst{1}   \and
Tim Roughgarden \inst{2} }
\institute{London School of Economics \and
Columbia University }


\maketitle              
\begin{abstract}
Consensus protocols have traditionally been studied in the \emph{permissioned} setting, where all participants  are known to each other from the start of the protocol execution.  What differentiates the most prominent blockchain protocol Bitcoin  \cite{nakamoto2008bitcoin} from these previously studied protocols is that it operates in a \emph{permissionless} setting, i.e.\ it is a protocol for establishing consensus over an unknown network of participants that anybody can join, with as many identities as they like in any role. The arrival of this new form of protocol brings with it many questions.  Beyond Bitcoin and other proof-of-work (PoW) protocols, what can we prove about permissionless protocols in a general sense?  How does the recent stream of work on permissionless protocols relate to the  well-developed history of research on permissioned protocols?

\hspace{0.2cm} To help answer these questions, we describe a formal framework for the analysis of both permissioned and permissionless systems.  Our framework allows for  ``apples-to-apples'' comparisons between different categories of protocols and, in turn, the
  development of theory to formally discuss their relative merits. A major benefit of the framework is that it facilitates the application of a rich history of proofs and techniques for permissioned systems to problems in blockchain and the study of permissionless systems.
  Within our framework, we then address the questions above.  We consider a programme of research that asks, ``Under what adversarial conditions, and for what types of permissionless protocol, is consensus possible?'' 
We prove several results for this programme, our main result being that \emph{deterministic} consensus is not possible for permissionless protocols.  
\keywords{Consensus  \and Proof-of-Work \and Proof-of-Stake \and Proof-of-Space}
\end{abstract}
\section{Introduction}

The Byzantine Generals Problem \cite{pease1980reaching,lamport1982byzantine} was introduced by Lamport, Shostak and Pease   to formalise the problem of reaching consensus in a context where faulty processors may display arbitrary behaviour. The problem has subsequently become a central topic in distributed computing.  Of particular relevance to us here are the seminal works of Dwork, Lynch and Stockmeyer \cite{DLS88}, who considered the problem in a range of synchronicity settings, and the result of Dolev and Strong \cite{dolev1983authenticated} showing that, even in the strongly synchronous setting of reliable next-round message delivery with PKI,  $f+1$ rounds of interaction are necessary to solve the problem if up to $f$ parties are faulty.  

\vspace{0.13cm} 
\noindent \textbf{The permissionless setting (and the need for a framework).} 
 This rich history of analysis considers the problem of consensus in the \emph{permissioned}  setting, where all participants  are known to each other from the start of the protocol execution. More recently, however, there has been significant interest in a number of protocols, such as Bitcoin \cite{nakamoto2008bitcoin} and Ethereum \cite{buterin2018ethereum}, that operate in a fundamentally different way. What differentiates these new protocols is that they operate in a \emph{permissionless} setting, i.e.\ these are protocols for establishing consensus over an unknown network of participants that anybody can join, with as many identities as they like in any role. Interest in these new protocols is such that, at the time of writing, Bitcoin has a market capitalisation of over \$400 billion.\footnote{See www.coinmarketcap.com for a comprehensive list of cryptocurrencies and their market capitalisations.} 
  Given the level of investment, it seems important to put the study of permissionless protocols on a firm theoretical footing.

Since results for the permissioned setting rely on bounding the number of faulty participants, and since there may be an \emph{unbounded} number of faulty participants in the permissionless setting, it is clear that classical results for the permissioned setting will not carry over to the permissionless setting directly. Consider the aforementioned proof of Dolev and Strong \cite{dolev1983authenticated} that $f+1$ rounds are required if $f$ many participants may be faulty, for example. If the number of faulty participants is unbounded, then the apparent conclusion is that consensus is not possible. To make consensus possible in the permissionless setting, some substantial changes to the setup assumptions are therefore required.  Bitcoin approaches this issue by introducing the notion of `proof-of-work' (PoW) and limiting the computational (or hashing) power of faulty participants. A  number of papers \cite{garay2018bitcoin,WHGSW16,garay2020resource} consider frameworks for the analysis of Bitcoin and other PoW protocols. 
The PoW mechanism used by Bitcoin is, however, just one approach to defining permissionless protocols.  As has been well documented \cite{brown2019formal}, proof-of-stake (PoS) protocols, such as Ouroboros \cite{kiayias2017ouroboros} and Algorand \cite{chen2016algorand}, are a form of permissionless protocol with  very different properties, and face a different set of design challenges. 
As we will expand on here, there are a number of reasons why PoS protocols do not fit into  the previously mentioned frameworks  for the analysis of Bitcoin.
 The deeper question remains, how best to understand permissionless protocols more generally?
 
 \vspace{0.13cm} 
\noindent \textbf{Defining a framework.} 
Our first aim is to describe a framework that allows one to formally describe and analyse both permissioned and permissionless protocols in a general sense, and to compare their properties. To  our knowledge, our framework is the first capable of modelling all significant features of  PoW and PoS protocols simultaneously, as well as other approaches like proof-of-space \cite{ren2016proof}. This allows us to prove general impossibility results for permissionless protocols. The framework is constructed according to two related design principles: 

\begin{enumerate} 
\item Our aim is to establish a framework capable of dealing with permissionless protocols, but which is as similar as possible to the standard frameworks in distributed computing for dealing with permissioned protocols.  As we will see in Sections \ref{Sync} and \ref{PSync}, a major benefit of this approach is that it facilitates the application of classical proofs and techniques in distributed computing to problems in `blockchain' and the study of permissionless protocols.

\item We aim to produce a framework which is as accessible as possible for researchers in blockchain without a strong background in security. To do so, we blackbox the use of cryptographic methods where possible, and isolate a small number of properties for permissionless protocols that are the key factors in determining the performance guarantees that are possible for different types of protocol (such as availability and consistency in different synchronicity settings).  
\end{enumerate}

  In Section \ref{framework} we describe a framework of this kind, according to which  protocols run relative to
  a \emph{resource pool}. This resource pool specifies a \emph{resource balance} for
  each participant over the duration of the execution (such as
  hashrate or stake in the currency), which may be used in determining
  which participants are permitted to make broadcasts updating the state.

\vspace{0.15cm} \noindent \textbf{Byzantine Generals in the Permissionless Setting.} Our second aim  is to address a programme of research that looks to replicate for the permissionless setting what papers such as \cite{DLS88,dolev1983authenticated,lamport1982byzantine} achieved for the permissioned case. Our framework allows us to formalise the question, ``Under what adversarial conditions, under what synchronicity assumptions, and for what types of permissionless protocol (proof-of-work/proof-of-stake/proof-of-space), are solutions to the Byzantine Generals Problem possible?''  In fact, the theory  of consensus for permissionless protocols is quite different than for the permissioned case.  Our main theorem establishes one such major difference. All terms in the statement of Theorem \ref{detimposs} below will be formally defined in Sections \ref{framework} and \ref{Sync}. Roughly, the adversary is $q$-bounded if it always has at most a $q$-fraction of the total resource balance (e.g.\ a $q$-fraction of the total hashrate).

\vspace{0.2cm} 
\noindent \textbf{Theorem 1}. \emph{ Consider the synchronous and permissionless setting, and suppose $q\in (0,1]$. There is no deterministic  protocol that solves the Byzantine Generals Problem for a $q$-bounded adversary.}
\vspace{0.2cm}


\noindent The positive results that we previously mentioned for the permissioned case concerned deterministic protocols. So, Theorem \ref{detimposs} describes a fundamental difference in the theory for the permissioned and permissionless settings. With Theorem \ref{detimposs} in place, we then focus on probabilistic solutions to the Byzantine Generals Problem.   We leave the details until Sections \ref{Sync} and \ref{PSync}, but highlight below another theorem of significant interest, which clearly separates the functionalities that can be achieved by PoW and PoS protocols. 

\vspace{0.13cm} 
\noindent \textbf{Separating PoW and PoS protocols.} The resource pool will be defined as a function that allocates a resource balance to each participant, depending on time and on the messages broadcast by protocol participants.
One of our major concerns is to understand how properties of the resource pool may
influence the functionality of the resulting protocol. In Sections \ref{framework}, \ref{Sync} and \ref{PSync} we will be
concerned, in particular, with the distinction between scenarios in which the
 resource pool is given as a protocol input, and scenarios where the resource pool is unknown. 
We  refer to these as the {\em sized} and {\em unsized} settings, respectively. PoS protocols are best modelled in the sized setting, because the way in which a participant's resource balance depends on the set of broadcast messages (such as blocks of transactions) is given from the start of the protocol execution. PoW protocols, on the other hand, are best modelled in the unsized setting, because one does not know in advance how a participant's hashrate will vary over time.  The fundamental result when communication is partially synchronous is that no PoW protocol gives a probabilistic solution to the Byzantine Generals Problem: 

\vspace{0.2cm} \noindent \textbf{Theorem 3}. \emph{There is no permissionless protocol giving a probabilistic solution to the Byzantine Generals Problem in the unsized setting with partially synchronous communication. }

\vspace{0.1cm} 
 \noindent In some sense, Theorem \ref{4.1} can be seen as an analogue of the CAP Theorem \cite{brewer2000towards,gilbert2002brewer} for our framework, but with a trade-off now established between `consistency’ and weaker notion of `availability’ than considered in the CAP Theorem (and with the unsized setting playing a crucial role in establishing this tradeoff).  For details see Section \ref{PSync}. 

\subsection{Related work} \label{rw} 
 In the interests of conserving space, we describe here the most relevant related papers and refer the reader to Appendix 1 for a more detailed account.   

The Bitcoin protocol was first described in 2008 \cite{nakamoto2008bitcoin}. Since then, a number of papers (see, for example, \cite{garay2018bitcoin,WHGSW16,pass2017rethinking,guo2019synchronous}) have considered frameworks for the analysis of PoW protocols. These papers generally work within the UC framework of Canetti \cite{canetti2001universally}, and make use of a random-oracle (RO) functionality to model PoW.  As we shall see in Section \ref{framework}, however, a more general form of oracle is required for modelling PoS and other forms of permissionless protocol. With a PoS protocol, for example, a participant's apparent stake (and their corresponding ability to update state) depends on the set of broadcast messages that have been received, and \emph{may therefore appear different from the perspective of different participants} (i.e.\ unlike hashrate, measurement of a user's stake is user-relative).  In Section \ref{framework} we will also describe various other modelling differences that are required to be able to properly analyse a range of attacks, such as `nothing-at-stake' attacks, on PoS protocols.


In \cite{garay2020resource}, the authors considered a framework with similarities to that considered here, in the sense that ability to broadcast is limited by access to a restricted resource.  In particular, they abstract the core properties that the resource-restricting paradigm offers by means of a \emph{functionality wrapper}, in the UC framework, which when applied to a standard point-to-point network restricts the ability  to send new messages. However, the random oracle functionality they consider is appropriate for modelling PoW rather than PoS protocols, and does not reflect, for example,  the sense in which resources such as stake can be user relative (as discussed above), as well as other significant features of PoS protocols discussed in Section \ref{PoWPoS}.

In \cite{terner2020permissionless}, a model is considered which carries out an analysis somewhat similar to that in \cite{garay2018bitcoin}, but which blackboxes all probabilistic elements of the process  by which processors are selected to update state. Again, the model provides a potentially useful way to analyse PoW protocols, but does not reflect PoS protocols in certain fundamental regards. In particular, the model does not reflect the fact that stake is user relative (i.e.\  the stake of user $x$ may appear different from the perspectives of users $y$ and $z$). The model also does not allow for analysis of the `nothing-at-stake' problem, and does not properly reflect timing differences that exist between PoW and PoS protocols, whereby users who are selected to update state may delay their choice of block to broadcast upon selection. These issues are discussed in more depth in Section \ref{framework}.

As stated in the introduction, Theorem \ref{4.1} can be seen as a recasting of the CAP Theorem \cite{brewer2000towards,gilbert2002brewer} for our framework.  CAP-type theorems have previously been shown for various PoW frameworks \cite{pass2017rethinking,guo2019synchronous}.

%

\section{The framework} \label{framework} 
\subsection{The computational model}  \label{cm} 

\noindent \textbf{Informal overview.}  We use a very simple  computational model, designed to be as similar as possible to standard models from distributed computing (e.g. \cite{DLS88}),  while also being adapted to deal with the permissionless setting.\footnote{There are a number of papers analysing Bitcoin \cite{garay2018bitcoin,WHGSW16} that take the approach of working within the language of the UC framework of Canetti \cite{canetti2001universally}. Our position is that this provides a substantial barrier to entry for researchers in blockchain who do not have a strong background in security, and that the power of the UC framework remains largely unused in the subsequent analysis.} Processors are specified by state transition diagrams. A \emph{permitter oracle} is introduced as a generalisation of the random oracle functionality in the Bitcoin Backbone paper \cite{garay2018bitcoin}:  It is the permitter oracle's role to grant \emph{permissions} to broadcast messages.  The duration of the execution is divided into timeslots. Each processor enters each timeslot $t$ in a given \emph{state} $x$, which determines the instructions for the processor in that timeslot -- those instructions may involve broadcasting messages, as well as sending \emph{requests} to the permitter oracle. The state $x'$ of the processor at the next timeslot is determined by the state $x$, together with the messages and permissions received at $t$.

\vspace{0.2cm} 
\noindent \textbf{Formal description.}  For a list of commonly used variables and terms, see Table \ref{terms} in Appendix 2. 
We consider a (potentially infinite) system of \emph{processors}, some of which may be \emph{faulty}. Each processor is specified by a state transition diagram, for which the number of states may be infinite. At each timeslot $t$ of its operation, a processor $p$ \emph{receives} a pair $(M,M^{\ast})$, where either or both of $M$ and $M^{\ast}$ may be empty. Here, $M$ is a finite set of \emph{messages} (i.e.\ strings) that have previously been \emph{broadcast} by other processors. We refer to $M$ as the \emph{message set} received by $p$ at $t$, and say that each message $m\in M$ is received by $p$ at $t$. $M^{\ast}$ is a potentially infinite set of pairs $(m,t')$, where each $m$ is a message and each $t'$ is a timeslot. $M^{\ast}$ is referred to as the \emph{permission set} received by $p$ at $t$. If  $(m,t')\in M^{\ast}$, then receipt of the permission set $M^{\ast}$  means that $p$ is able to broadcast $m$ at step $t'$: Once $M^{\ast}$ has been received, we refer to $m$ as being \emph{permitted} for $p$ at $t'$.   To complete the instructions for timeslot $t$,  $p$ then broadcasts a finite set of messages $M'$ that are permitted for $p$ at $t$, makes a finite  \emph{request set} $R$,  and then enters a new state $x'$, where $x',M'$ and $R$ are determined by the present state $x$ and $(M,M^{\ast})$, according to the state transition diagram. The form of the request set $R$ will be described shortly, together with how $R$ determines the permission set  received at the next timeslot.

Amongst the states of a processor are a non-empty set of possible \emph{initial states}. The \emph{inputs} to $p$ determine which initial state it starts in.  If a variable is specified as an input to $p$, then we refer to it as \emph{determined} for $p$, referring to the variable as \emph{undetermined}  for $p$ otherwise.
If a variable is determined/undetermined for all $p$, we simply refer to it as determined/undetermined. To define outputs, we consider each processor to have a distinguished set of \emph{output states}, a processor's output being determined by the first output state it enters. 
Amongst the inputs to $p$ is an \emph{identifier} $\mathtt{U}_p$, which can be thought of as a name  for $p$, and which is unique in the sense that $\mathtt{U}_{p}\neq \mathtt{U}_{p'} $ when $p\neq p'$.  A principal difference between the permissionless setting (as considered here) and the permissioned setting is that, in the permissionless setting, the number of processors is undetermined, and $\mathtt{U}_p$ is undetermined for $p'$ when $p'\neq p$.

We consider a real-time clock, which exists  outside the system and  measures time in natural number timeslots. We also allow the inputs to $p$ to include messages, which are thought of as having been received by $p$  at timeslot $t=0$.  
A \emph{run} of the system is described by specifying the initial states for all processors and by specifying, for each timeslot $t\geq 1$: (1) The messages and permission sets received by each processor at that timeslot, and; (2) The instruction that each processor executes, i.e.,\ what messages it broadcasts, what requests it makes, and the new state it enters.

We require that each message is received by $p$ at most once for each time it is broadcast, i.e.\  at the end of the run it must be possible to specify an injective function $d_p$ mapping each pair $ m,t $, such that $m$ is received by $p$ at timeslot $t$, to a triple $(p',m,t')$, such that $t'<t$, $p'\neq p$ and such that $p'$ broadcast $m$ at $t'$.  

In the \emph{authenticated} setting, we assume the existence of a signature scheme (without PKI),  see Appendix 3 for formal details. We let $m_{\mathtt{U}}$ denote the message $m$ signed by $\mathtt{U}$.  We consider standard versions (see Appendix 3) of the  \emph{synchronous} and \emph{partially synchronous} settings (as in \cite{DLS88}) -- the version of the partially synchronous setting we consider is that in which the determined upper bound $\Delta$ on message delay holds after some undetermined stabilisation time.

\subsection{The resource pool and the permitter} \label{RP}

 \noindent \textbf{Informal motivation.} Who should be allowed to create and broadcast new Bitcoin blocks? More broadly, when defining a permissionless protocol,  who should be able to broadcast new messages? For a PoW protocol,  the selection is made depending on computational power. PoS protocols are defined in the context of specifying how to run a currency, and select identifiers according to their stake in the given currency. More generally, one may consider a scarce resource, and then select identifiers according to their corresponding resource balance.

We consider a framework according to which protocols run relative to a \emph{resource
pool}, which specifies a resource balance for each identifier over the duration of the run.  
The precise way in which the resource pool is used to determine identifier selection is then black boxed through the use of what we call the \emph{permitter oracle}, to which processors can make requests to broadcast, and which will respond depending on their resource balance. To model Bitcoin, for example, we simply allow each identifier (or rather, the processor allocated the identifier) to make a request to broadcast a block at each step of operation. The permitter oracle then gives a positive response with probability depending on their resource balance, which in this case is defined by hashrate. So, this gives a straightforward way to model the process, without the need for a detailed discussion of hash functions and how they are used to instantiate the selection process. \\

\noindent \textbf{Formal specification.} At each timeslot $t$, we refer to the set of all messages that have been received or broadcast by $p$ at timeslots $\leq t$ as the \emph{message state} $M$ of $p$. Each run happens relative to a (determined or undetermined)
\emph{resource pool},\footnote{As described more precisely in Section \ref{PoWPoS}, whether the resource pool is determined or undetermined will decide whether we are in the \emph{sized} or \emph{unsized} setting.} which in the general case is a function
$\mathcal{R}: \mathcal{U} \times \mathbb{N} \times \mathcal{M}
\rightarrow \mathbb{R}_{\geq 0}$,
where $\mathcal{U}$ is the set of all identifiers and $\mathcal{M}$ is the set of all possible sets of messages (so, $\mathcal{R}$ can be thought of as specifying the resource
balance of each identifier at each timeslot, possibly
relative to a given message state).\footnote{For a PoW protocol like Bitcoin,
the resource balance of each identifier will be their (relevant)
computational power at the given timeslot (and hence independent of
the message state). For PoS protocols, such as
 Ouroboros \cite{kiayias2017ouroboros} and Algorand  \cite{chen2016algorand}, however, the resource balance will be
determined by `on-chain' information, i.e.\ information recorded in
the message state  
$M$.} For each $t$ and $M$, we suppose: (a)  If $\mathcal{R}(\mathtt{U},t,M)\neq 0$ then $\mathtt{U}=\mathtt{U}_p$ for some processor $p$; 
(b) There are  finitely many $\mathtt{U}$ for which $\mathcal{R}(\mathtt{U},t,M)\neq 0$, and; 
(c)  $\sum_{\mathtt{U}} \mathcal{R}(\mathtt{U},t,M) >0$.

 After receiving messages and a permission set at timeslot $t$, suppose $p$'s message state is $M_0$ and that, for each $t'$, $M^{\ast}(t')$ is the set of all messages that are permitted for $p$ at timeslots $\leq t'$. We consider two \emph{settings} -- the \emph{timed} and \emph{untimed} settings. The form of each request $r\in R$  made by $p$ at timeslot $t$ depends on the setting, as specified below. While the following definitions might initially seem a little abstract, we will shortly give some concrete examples to make things clear.  
 
 \begin{itemize} 
 
   \item \textbf{The untimed setting}.  Here, each request $r$ made  by  $p$ must be\footnote{To model a perfectly co-ordinated adversary,  we will later modify this definition to allow the adversary to make requests of a slightly more general form (see Section \ref{adv}).} of the form $(M,A)$, where
 $M \subseteq M_0 \cup M^{\ast}(t)$, and where $A$ is
some (possibly empty) extra data. 
The permitter oracle will respond with a (possibly empty) set $M^{\ast}$ of pairs of the form $(m,t+1)$. The value of $M^{\ast}$ 
 will be assumed to be a probabilistic function\footnote{See Appendix 5 for a detailed explanation of what it means to be a `probabilistic function'.} of the determined variables,   $(M,A)$, and of
$\mathcal{R}(\mathtt{U}_p,t,M)$, subject to the condition that $M^{\ast}=\emptyset$ if $\mathcal{R}(\mathtt{U}_p,t,M)=0$. (If modelling Bitcoin, for example, $M$ might be a set of blocks that have been received by $p$, or that $p$ is already permitted to broadcast, while $A$  specifies a new block extending the `longest chain' in $M$. If the block is valid, then the permitter oracle will give permission to broadcast it with probability depending on the resource balance of $p$ at time $t$. We will expand on this example below.)

 \item \textbf{The timed setting}. Here,  each request $r$ made  by  $p$ must be of the form $(t',M,A)$, where $t'$ is a timeslot, $M \subseteq M_0 \cup M^{\ast}(t')$
and  where  $A$ is as in the untimed setting. The permitter oracle will respond with a  set $M^{\ast}$ of pairs of the form $(m,t')$. 
$M^{\ast}$ 
 will be assumed to be a probabilistic function of the determined variables,\footnote{In the authenticated setting the response of the permitter  is now allowed to be a probabilistic function also of $\mathtt{U}_p$. See Appendix 3 for details.}  $(t',M,A)$, and of
$\mathcal{R}(\mathtt{U}_p,t',M)$, subject to the condition that $M^{\ast}=\emptyset$ if $\mathcal{R}(\mathtt{U}_p,t',M)=0$.

 \end{itemize} 
 If the set of requests made by $p$ at timeslot $t$ is $R= \{ r_1,\dots,r_k \}$, and if the permitter oracle responds with $M_1^{\ast},\dots, M_k^{\ast}$ respectively, then $M^{\ast}:=\cup_{i=1}^k M_i^{\ast}$ is the permission set received by $p$ at its next step of operation.  
 
 By a \emph{permissionless protocol} we mean a pair $(\mathtt{S},\mathtt{O})$, where $\mathtt{S}$ is a state transition diagram to be followed by all non-faulty processors, and where $\mathtt{O}$ is a permitter oracle, i.e.\ a probabilistic function of the form described above. It should be noted that the roles of the resource pool and the
permitter oracle are different, in the following sense: While the resource pool is
a variable (meaning that a given protocol will be expected to function with respect to all possible resource pools consistent with the setting), the permitter is part of the protocol description.\\

\noindent \textbf{How to understand the form of requests (informal).} To help explain these definitions, we consider how to model some simple protocols. 

\paragraph{Modelling Bitcoin.} To model Bitcoin, we work in the untimed setting, and we define the set of possible messages to be the set of possible \emph{blocks} (in this paper, we use the terms `block' and `chain' in an informal sense, for the purpose of giving examples).  We then allow  $p$ to make a single request of the form $(M,A)$ at each timeslot. Here $M$ will be a set of blocks that have been received by $p$, or that $p$ is already permitted to broadcast. The entry $A$ will be data (without PoW attached) that specifies a block extending the `longest chain' in $M$. If $A$ specifies a valid block, then the permitter oracle will give permission to broadcast the block specified by $A$ with probability depending on the resource balance of $\mathtt{U}_p$ at time $t$ (which is $p$'s hashrate, and is independent of $M$). So, if each timeslot corresponds to a short time interval (one second, say), then the model `pools' all attempts by $p$ to find a nonce within that time interval into a single request. The higher $\mathtt{U}_p$'s resource balance at a given timeslot, the greater the probability $p$ will be able to mine a block at that timeslot. \footnote{So, in this simple model, we don't deal with any notion of a `transaction'. It is clear, though, that the model is sufficient to be able to define what it means for blocks to be \emph{confirmed}, to define notions of \emph{liveness} (roughly, that the set of confirmed blocks grows over time with high probability) and \emph{consistency} (roughly, that with high probability, the set of confirmed blocks is monotonically increasing over time), and to prove liveness and consistency for the Bitcoin protocol in this model (by importing existing proofs, such as that in \cite{garay2018bitcoin}). } Note that the resource pool  is best modelled as undetermined here, because one does not know in advance how the hashrate attached to each identifier (or even the total hashrate) will vary over time.

\paragraph{Modelling PoS protocols} 
The first major difference for a PoS protocol is that the resource balance of each participant now depends on the message state, and may also be a function of time.\footnote{It is standard practice in PoS blockchain protocols to require a participant to have a currency balance that has been recorded in the blockchain for at least a certain minimum amount of time before they can produce new blocks, for example. So, a given participant may not be permitted to extend a given chain of blocks at timeslot $t$,  but may be permitted to extend the same chain at a later timeslot $t'$.} So, the resource pool is a function $\mathcal{R}: \mathcal{U} \times \mathbb{N} \times \mathcal{M}
\rightarrow \mathbb{R}_{\geq 0}$. A second difference is that $\mathcal{R}$ is determined, because one knows from the start how the resource balance of each participant depends on the message state as a function of time. Note that advance knowledge of $\mathcal{R}$ \emph{does not} mean that one knows from the start which processors will have large resource balances throughout the run, unless one knows which messages will be broadcast. A third difference is that, with PoS protocols, processors    can generally look ahead to determine their permission to broadcast at future timeslots, when their resource balance may be different than it is at present. This means that PoS protocols are best modelled in the timed setting, where processors can make requests corresponding to timeslots $t'$ other than the current timeslot $t$. To make these ideas concrete, let us consider a simple example.

 There are various ways in which `standard' PoS selection processes can
work. Let us restrict ourselves, just for now and for the purposes of
this example, to considering blockchain protocols in which the only broadcast
messages are blocks, and let us consider a longest chain PoS protocol
which works as follows: For each broadcast chain of blocks $C$ and for all
timeslots in a set $T(C)$, the protocol being modelled selects
precisely \emph{one} identifier who is permitted to produce blocks
extending $C$,
with the probability each identifier is chosen being proportional to
their wealth, which is a time dependent function of $C$.  To model a
protocol of this form, we work in the timed and authenticated setting.  We consider a 
resource pool
which takes any chain $C$ and allocates to each
identifier $\mathtt{U}_p$ their wealth according to $C$ as a function of $t$.  Then we can
consider a permitter oracle which chooses one identifier $\mathtt{U}_p$ for
each chain $C$ and each timeslot $t'$ in $T(C)$, each identifier
$\mathtt{U}_p$ being chosen with probability proportional to 
$\mathcal{R}(\mathtt{U}_p,t',C)$. The owner $p$ of the chosen identifer $\mathtt{U}_p$ corresponding to $C$ and $t'$, is
then given permission to broadcast blocks extending $C$ whenever
$p$ makes a request $(t',C,\emptyset)$.  This isolates a fourth major difference from the PoW case: For the PoS protocol, the request to broadcast and the resulting permission is not block specific, i.e.\ requests are of the form  $(t',M,A)$ for $A=\emptyset$, and the resulting permission is to broadcast  \emph{any} from the range of appropriately timestamped  and valid blocks extending $C$. If one were to make requests block specific, then users would be motivated to churn through large numbers of blocks, making the protocol best modelled as partly PoW.

To model a BFT PoS protocol like Algorand, the basic approach will be very similar to that described for the longest chain PoS protocol above, except that certain other messages might be now required in $M$ (such as authenticated votes on blocks) before permission to broadcast is granted, and permission may now be given for the broadcast of messages other than blocks (such  as votes on blocks).
 
\subsection{Defining the timed/untimed, sized/unsized and single/multi-permitter settings} \label{PoWPoS}

In the previous section we isolated  four qualitative differences between PoW and PoS protocols. The first difference is that, for PoW protocols, the resource pool is a function  $\mathcal{R}: \mathcal{U} \times \mathbb{N} 
\rightarrow \mathbb{R}_{\geq 0}$, while for PoS protocols, the resource pool is a function  $\mathcal{R}: \mathcal{U} \times \mathbb{N} \times \mathcal{M}
\rightarrow \mathbb{R}_{\geq 0}$. Then there are three differences in the \emph{settings} that are appropriate for modelling PoW and PoS protocols. We make the following formal definitions:  

\begin{enumerate} 
\item  \textbf{The timed and untimed settings}. This difference between the timed and untimed settings was specified in Section \ref{RP}.

\item  \textbf{The sized and unsized settings}.  We call the setting \emph{sized} if the resource pool is determined. By the \emph{total resource balance} we mean the function $\mathcal{T}: \mathbb{N} \times \mathcal{M} \rightarrow \mathbb{R}_{>0}$ defined by $\mathcal{T}(t,M):= \sum_{\mathtt{U}} \mathcal{R}(\mathtt{U},t,M)$.  For the unsized setting,
$\mathcal{R}$ and $\mathcal{T}$ are undetermined, with the only restrictions being:
\begin{enumerate} 
\item[(i)]  $\mathcal{T}$ only takes values in a determined interval 
 $[\alpha_0,\alpha_1]$, where $\alpha_0>0$ (meaning that, although $\alpha_0$ and $\alpha_1$ are determined, protocols will be required to function for all possible $\alpha_0>0$ and $\alpha_1>\alpha_0$, and for all undetermined $\mathcal{R}$ consistent with $\alpha_0,\alpha_1$, subject to (ii) below).\footnote{We consider resource pools with range restricted in this way, because it turns out to be an overly strong condition to require a protocol to function without
 \emph{any} further conditions on the resource pool, beyond
 the fact that it is a function to $\mathbb{R}_{\geq 0}$. Bitcoin will certainly fail  if the total resource balance over all identifiers decreases
sufficiently quickly over time, or if it increases too quickly, causing blocks to be produced too quickly compared to $\Delta$.}
\item[(ii)] There may also be bounds placed on the resource balance of identifiers owned by the adversary. 
\end{enumerate} 

\item  \textbf{The multi-permitter and single-permitter settings}. In the \emph{single-permitter} setting, each processor may submit a single request of the form $(M,A)$ or $(t,M,A)$ (depending on whether we are in the timed setting or not) at each timeslot, and it is allowed that $A\neq \emptyset$. In the \emph{multi-permitter} setting, processors can submit any finite number of requests at each timeslot, but they must all satisfy the condition that $A=\emptyset$.\footnote{The names `single-permitter' and `multi-permitter' come from the sizes of the resulting permission sets when modelling blockchain protocols. For PoW protocols the the permission set received at a single step will generally be of size at most 1, while this is not generally true for PoS protocols.}
\end{enumerate}

We do not define the general classes of PoW and PoS protocols (although we will be happy to refer to specific protocols as PoW or PoS). Such an approach would be too limited, being overly focussed on the step-by-step operations. In our impossibility results, we assume nothing about the protocol other than basic properties of the resource pool and permitter, as specified by the various settings above. We model PoW protocols in the untimed, unsized, and
single permitter settings, with  $\mathcal{R}: \mathcal{U} \times \mathbb{N} 
\rightarrow \mathbb{R}_{\geq 0}$. We  model PoS protocols in the timed, sized, multi-permitter and
authenticated settings, and with  $\mathcal{R}: \mathcal{U} \times \mathbb{N} \times \mathcal{M}
\rightarrow \mathbb{R}_{\geq 0}$. Appendix 4 expands on the reasoning behind these modelling choices. In the following sections, we will see that whether a protocol operates in the sized/unsized, timed/untimed, or multi/single-permitter settings is a key factor in determining  the performance guarantees that are possible (such as availability and consistency in different synchronicity settings).

\subsection{The adversary}  

Appendix 5 gives an expanded version of  this subsection and also considers the meaning of probabilisitic statements in detail.  In the permissionless setting, we  generally consider Byzantine faults, thought of as being carried out with malicious intent by an \emph{adversary}. The adversary controls a fixed set of faulty processors - in formal terms, the difference between faulty and non-faulty processors is that the state transition diagram for faulty processors might not be $\mathtt{S}$, as specified by the protocol. In this paper, we consider a static (i.e. non-mobile) adversary that controls a set of processors that is fixed from the start of the protocol execution. We do this to give the strongest possible form of our impossibility results.  
We place no bound on the \emph{size} of the set of processors controlled by the adversary. Rather, placing bounds on the power of the adversary in the permissionless setting means limiting their resource balance. For $q\in [0,1]$, we say the adversary is $q$-\emph{bounded} if their total resource balance is always at most a $q$ fraction of the total, i.e.\ for all $M,t$,  $\sum_{p\in P_A} \mathcal{R}(\mathtt{U}_p,t,M)\leq q\cdot \sum_{p\in P} \mathcal{R}(\mathtt{U}_p,t,M) $, where $P_A$ is the set of processors controlled by the adversary.

\subsection{The permissioned setting} 
So that we can compare the permissioned and permissionless settings, it is useful to specify how the permissioned setting is to be defined within our framework. According to our framework, the permissioned setting is exactly the same as the permissionless setting that we have been describing, but with the following differences:

\begin{itemize} 
\item The finite number $n$ of processors is determined, together with the identifier for each processor.
\item All processors are automatically permitted to broadcast all messages, (subject only to the same rules as formally specified in Appendix 2 for the authenticated setting).\footnote{It is technically convenient here to allow that processors can still submit requests, but that requests always get the same response (the particular value then being immaterial).}
\item Bounds on the adversary are now placed by limiting the \emph{number} of faulty processors -- the adversary is $q$-\emph{bounded} if at most a fraction $q$ of all processors are faulty.
\end{itemize}

\section{Byzantine Generals in the synchronous setting}   \label{Sync}

Recall from Section \ref{RP} that we write $m_{\mathtt{U}}$ to denote the message $m$ signed by $\mathtt{U}$.  We consider protocols for solving a version of `Byzantine Broadcast' (BB). A distinguished identifier $\mathtt{U}^{\ast}$, which does not belong to any processor, is thought of as belonging to the \emph{general}. Each processor $p$  begins with a protocol input $\mathtt{in}_p$, which is a set of messages from the general: either $\{ 0_{\mathtt{U}^{\ast}} \}, \{ 1_{\mathtt{U}^{\ast}}  \}$, or  $ \{ 0_{\mathtt{U}^{\ast}} , 1_{\mathtt{U}^{\ast}}  \}$.  All non-faulty processors $p$ must give the same output $o_p\in \{ 0,1 \}$. In the case that the general is `honest', there will exist $z\in \{ 0,1 \}$, such that $\mathtt{in}_p=\{ z_{\mathtt{U}^{\ast}} \}$ for all $p$, and in this case we require that $o_p=z$ for all non-faulty processors.  

 As we have already stipulated, processors also take other inputs beyond their \emph{protocol input} as described in the last paragraph, such as their  identifier and $\Delta$ -- to distinguish these latter inputs from the protocol inputs, we will henceforth refer to them as \emph{parameter inputs}.  The protocol inputs and the parameter inputs have  different roles, in that the form of the outputs required to `solve' BB only depend on the protocol inputs, but the protocol will be required to produce correct outputs for all possible parameter inputs.

\subsection{The impossibility of deterministic consensus in the permissionless setting} 

In Section \ref{RP}, we allowed the permitter oracle $\mathtt{O}$ to be a probabilistic function. In the case that $\mathtt{O}$ is deterministic, i.e.\ if there is a single output for each input, we will refer to the protocol $(\mathtt{S},\mathtt{O})$ as deterministic.

 In the following proof, it is convenient to consider an infinite set of processors. As always, though, (see Section \ref{RP}) we assume for each $t$ and $M$, that there are  finitely many $\mathtt{U}$ for which $\mathcal{R}(\mathtt{U},t,M)\neq 0$, and thus only finitely many corresponding processors given permission to broadcast. All that is really required for the proof to go through is that there are an unbounded number of identifiers that can participate \emph{at some timeslot} (such as is true for Bitcoin, or in any context where the adversary can transfer their resource balance to an unbounded number of possible public keys), and that the set of identifiers with non-zero resource balance can change quickly. In particular, this means that the adversary can broadcast using new identifiers at each timeslot. Given this condition, one can then adapt the proof of \cite{dolev1983authenticated}, that a permissioned protocol solving BB for a system with $t$ many faulty processors requires at least $t+1$ many steps, to show that a deterministic  protocol in the permissionless setting cannot always give correct outputs. Adapting the proof, however, is highly non-trivial, and requires establishing certain compactness conditions on the space of runs, which are straightforward in the permissioned setting but require substantial effort to establish in the permissionless setting.

\begin{theorem} \label{detimposs} 
Consider the synchronous setting and suppose $q\in (0,1]$. There is no deterministic  permissionless protocol that solves BB for a $q$-bounded adversary. 
\end{theorem}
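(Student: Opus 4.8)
The plan is to adapt the classical round-complexity lower bound of Dolev and Strong \cite{dolev1983authenticated}: in the permissioned authenticated setting with next-round message delivery, any deterministic protocol solving BB while tolerating $f$ faulty processors requires at least $f+1$ rounds. The point is that in the permissionless setting a $q$-bounded adversary's supply of corruptible \emph{broadcasting power} is effectively replenished at every timeslot, since by our standing assumptions the set of identifiers with non-zero resource balance may change completely from one timeslot to the next and there is an unbounded reservoir of fresh identifiers (and, with no full PKI, a fresh adversarial identifier's signed messages are indistinguishable from a fresh honest one's). Thus across an $r$-timeslot execution the adversary can behave like $\Theta(r)$ distinct faulty processors spread over the $r$ timeslots, a fresh batch per timeslot, and the ``$f+1$ rounds'' barrier becomes ``no finite number of timeslots suffices''. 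It therefore suffices to (a) establish that a protocol as hypothesised must have a finite uniform termination bound $r$ across the family of executions we consider, and then (b) beat that $r$ with a chain of indistinguishable executions realisable by a $q$-bounded adversary.

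For (b) I would work in the synchronous setting with $\Delta=1$ and choose an adversarial resource pool $\mathcal{R}$ spread thinly over time: at each timeslot only a small fresh finite set of identifiers has positive resource balance, arranged so that the total lies in the fixed interval $[\alpha_0,\alpha_1]$, so that at each timeslot the adversary can own a fresh identifier holding a substantive (hence broadcast-enabling) share while still holding at most a $q$-fraction of the total, and so that there is an unbounded supply of observer processors with zero resource balance throughout, all of which must nonetheless output correctly. Against this backdrop I would run the standard scenario/chain argument: build a finite sequence of executions $E_0,\dots,E_k$ (with $k$ bounded in terms of $r$) in which the initial inputs are all $0_{\mathtt{U}^{\ast}}$ in $E_0$ and all $1_{\mathtt{U}^{\ast}}$ in $E_k$ — so every non-faulty processor must output $0$ in $E_0$ and $1$ in $E_k$ — while each consecutive pair $E_i,E_{i+1}$ is indistinguishable to some non-faulty processor $p_i$, forcing $p_i$ to the same output in both and yielding $0=1$. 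Consecutive executions differ by activating one additional fresh faulty identifier at the relevant timeslot, which delivers to a shrinking set of recipients the messages dictated by $E_{i+1}$ and to everyone else those dictated by $E_i$, pushing the disagreement one timeslot later. Here weak decentralisation is indispensable and is precisely what blocks the Theorem~\ref{sim}-style escape: non-faulty processors may not delay or re-time their broadcasts, so the difference between $E_i$ and $E_{i+1}$ is forced to propagate only through genuinely faulty fresh identifiers, and each non-faulty processor's behaviour is pinned down by its local $(M,M^{\ast})$-history, making the indistinguishability airtight; without weak decentralisation a closed circle of initial resource-holders could retime messages and the chain would break.

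The main obstacle, and the step that genuinely goes beyond the permissioned template, is (a): showing that a hypothesised protocol must decide by some single finite timeslot $r$ in every execution of the relevant form (non-faulty processors following $\mathtt{S}$ and respecting weak decentralisation; the adversary confined to timing tricks and to choosing which fresh identifiers to activate; $q$-boundedness maintained throughout). In the permissioned setting this is immediate from K\"onig's lemma applied to the finitely-branching tree of executions, but in the permissionless setting that tree branches infinitely, as the adversary may activate any of unboundedly many identifiers. The remedy is to topologise the space of admissible runs and prove it sequentially compact — exploiting that at each timeslot only finitely many identifiers are live, so that after quotienting by the symmetry among interchangeable fresh identifiers the branching at each timeslot is effectively finite — and then argue that if decision times were unbounded a limit run would exist in which some non-faulty processor never decides, violating the termination requirement of BB; one also has to check the limit run still respects $q$-boundedness and weak decentralisation. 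Once this uniform $r$ is in hand, the chain of (b) has bounded length and drives the contradiction, proving the theorem.
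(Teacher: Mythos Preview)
Your high-level plan matches the paper's proof almost exactly: establish a uniform termination bound via a compactness/K\"onig argument (your step (a), the paper's part (P1)), then run a Dolev--Strong-style chain of pairwise-indistinguishable executions interpolating between all-$0$ and all-$1$ inputs (your step (b), the paper's part (P2)). You also correctly identify the compactness step as the genuine departure from the permissioned template and correctly flag the role of weak decentralisation in pinning down non-faulty behaviour.

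There is, however, a concrete gap in your step (b). You take $\Delta=1$ and describe the faulty fresh identifier as one that ``delivers to a shrinking set of recipients the messages dictated by $E_{i+1}$ and to everyone else those dictated by $E_i$''. That is the classical Dolev--Strong move, but it relies on \emph{private channels}: a faulty processor that can send different things to different recipients in the same round. In the present framework processors \emph{broadcast}, and with $\Delta=1$ every broadcast is received by every other processor at exactly the next timeslot, so no selective delivery is available at all. The paper fixes this by taking $\Delta=2$: the adversary (thought of as also controlling the timing rule) can then arrange that a given broadcast reaches some processors at $t+1$ and others at $t+2$, simulating selective delivery via one slot of receipt delay. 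This in turn forces the chain to be far more intricate than in the private-channel setting: to change a single processor $p$'s input without the change becoming simultaneously visible to both witness processors $p_0,p_1$, the paper must recursively ``remove'' (delay to the horizon $k$) the broadcasts of every processor downstream of $p$, perform the perturbation, then ``add'' them back --- the \texttt{Remove}/\texttt{Add}/\texttt{Change} machinery in the proof. Your one-line description of consecutive-execution differences does not survive the switch to the broadcast model, and the chain is not of length $O(r)$ but considerably longer.

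On step (a), your instinct to quotient by a symmetry among interchangeable fresh identifiers is plausible but underspecified, and in the authenticated setting identifiers are not literally interchangeable (they sign). The paper's route is more concrete and worth cribbing: fix two observer processors $p_0,p_1$ with zero resource balance throughout; choose $\mathcal{R}$ with \emph{disjoint supports across timeslots} and with no single identifier ever exceeding a $q$-fraction (so corrupting one processor per $B_t$ keeps the adversary $q$-bounded); partition the infinite tail of never-broadcasting processors into two blocks forced to receive identical message sets; then prove directly (their Lemma~\ref{finite}) that the sets $M_t,Q_t,B_t$ of possible messages and broadcasters up to each timeslot are finite. K\"onig's Lemma then applies to the finitely-branching tree of ``$t$-specifications'' to yield the uniform bound $k$.
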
  
\begin{proof} 
See Appendix 6. 
\end{proof} 

Theorem \ref{detimposs} limits the kind of solution to BB that is possible in the permissionless setting. In the context of a blockchain protocol (for state machine replication), however, one is (in some sense) carrying out multiple versions of (non-binary) BB  in sequence. One approach to circumventing Theorem \ref{detimposs} would be to accept some limited centralisation: One might have a fixed circle of participants carry out each round of BB (involving interactions over multiple timeslots according to a permissioned protocol), only allowing in new participants after the completion of each such round. While this approach clearly does \emph{not} involve a decentralised solution to BB, it might well be considered sufficiently decentralised in the context of state machine replication.

\subsection{Probabilistic consensus} 
 In light of Theorem \ref{detimposs}, it becomes interesting to consider permissionless protocols giving \emph{probabilistic} solutions to BB. To this end, from now on, we consider protocols that take an extra parameter input $\varepsilon>0$, which we call the \emph{security parameter}. Now we require that, for any value of the security parameter input $\varepsilon>0$, it holds with probability $>1-\varepsilon$ that all non-faulty processors give correct outputs. 

\vspace{0.1cm} 
Appendix 7 explains which questions remain open for probabilistic permissionless protocols in the synchronous setting. For now, in the interests of conserving space, we just briefly mention another negative result: 

\begin{theorem} \label{q>1.5} Consider the synchronous and unauthenticated setting. If $q\geq \frac{1}{2}$, then there is no permissionless protocol giving a probabilistic solution to BB for a $q$-bounded adversary. 
\end{theorem}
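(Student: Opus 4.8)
**Proof proposal for Theorem 4.4 ($q \geq \tfrac12$ impossibility in the unauthenticated, synchronous, permissionless setting).**

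The plan is to adapt the split-brain / indistinguishability argument already used for Theorem~\ref{nobroad}, pushing it through the permissionless setting where the adversary, being $q$-bounded with $q \geq \tfrac12$, controls at least as much resource balance as the honest participants. First I would recall the standard structure: assume for contradiction that $(\mathtt{S},\mathtt{O})$ gives a probabilistic solution to BB with security parameter $\varepsilon$, and fix $\varepsilon$ small (say $\varepsilon < \tfrac14$). The key point is that without authentication, an honest processor receiving a set of messages cannot tell whether they originated from a genuinely honest processor or from an adversarial processor perfectly mimicking honest behaviour; and because the resource pool can be chosen so that the total resource balance splits into two halves, each of which the adversary can simulate, we can set up two ``worlds'' that look identical to a targeted honest processor but demand contradictory outputs.

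The concrete construction I would use: pick a resource pool in which, at every timeslot and for every message state, the total resource balance is shared between two disjoint groups $G_0$ and $G_1$ of identifiers, each holding a $\tfrac12$-fraction (so each group is within the reach of a $q$-bounded adversary with $q \geq \tfrac12$). Consider an honest ``audience'' processor $p$ with zero resource balance throughout (allowed in our framework, and required to output correctly). In World~0 the general is honest with input $0$, the processors controlling $G_0$ are honest, and the adversary uses $G_1$ to run a faithful copy of the protocol \emph{as if} the general had input $1$ — since there is no PKI, nothing stops the adversary from producing messages carrying $1_{\mathtt{U}^\ast}$ and simulating honest-looking $G_1$ behaviour. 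By BB validity (honest general), $p$ must output $0$ with probability $>1-\varepsilon$. World~1 is the mirror image: the general is honest with input $1$, $G_1$ is honest, and the adversary uses $G_0$ to simulate the ``input $0$'' execution; now $p$ must output $1$ with probability $>1-\varepsilon$. I would then exhibit a third scenario in which the general is faulty and sends $0_{\mathtt{U}^\ast}$ to one half and $1_{\mathtt{U}^\ast}$ to the other, with appropriately chosen message delays, such that $p$'s view is distributed identically to its view in World~0 \emph{and} identically to its view in World~1 — giving $\Pr[p \text{ outputs } 0] > 1-\varepsilon$ and $\Pr[p \text{ outputs } 0] < \varepsilon$ simultaneously, a contradiction for $\varepsilon < \tfrac12$.

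The main obstacle, and the place where the permissionless setting genuinely differs from the classical proof, is arguing that the adversary actually \emph{can} perform these simulations within the constraints of the model: it must be able to obtain the identifiers in $G_1$ (resp.\ $G_0$), it must have enough resource balance for the permitter oracle $\mathtt{O}$ to grant it the same permissions it would grant honest processors, and the whole simulated transcript must be consistent with the $q$-bounded condition at \emph{every} $(t,M)$. This is exactly where $q \geq \tfrac12$ is essential: it lets the adversary's half of the resource pool fully shadow the honest half. I would handle this by choosing the resource pool $\mathcal{R}$ up front so that $G_0$ and $G_1$ each carry exactly half the total at all times and for all message states, checking the defining inequality $\sum_{p \in P_A}\mathcal{R}(\mathtt{U}_p,t,M) \leq q \sum_{p\in P}\mathcal{R}(\mathtt{U}_p,t,M)$ is met with $q = \tfrac12 \leq q$, and invoking the fact that $\mathtt{O}$'s response depends only on the determined data $(M,A)$ and on $\mathcal{R}(\mathtt{U}_p,t,M)$ — so the adversary, holding identifiers with the right resource balance, receives statistically identical permission sets. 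A secondary technical point is the careful bookkeeping of the injective message-delivery function $d_p$ so that the adversary's replayed/simulated messages are legitimately deliverable to $p$ with the timings required to make the three views coincide; this is routine but must be stated. Finally I would note that this argument is insensitive to whether we are in the sized or unsized, timed or untimed settings, so the theorem holds as stated for all permissionless protocols.
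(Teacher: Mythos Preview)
Your high-level strategy --- split the resource pool into two equal halves, use the unauthenticated setting to let the adversary perfectly simulate the ``other half,'' and derive contradictory output constraints --- is the right idea and is essentially what the paper does. But the specific three-world argument you sketch has a genuine gap: you route the contradiction through a \emph{single} zero-resource observer $p$, claiming that in the third (faulty-general) scenario $p$'s view is simultaneously identical to its view in World~0 and in World~1. That cannot hold. In this framework the general is not a processor; ``the general is honest with input $z$'' means every processor's protocol input is $\{z_{\mathtt{U}^\ast}\}$, received at $t=0$ as part of the initial state. So in World~0 your $p$ has protocol input $\{0_{\mathtt{U}^\ast}\}$, in World~1 it has $\{1_{\mathtt{U}^\ast}\}$, and in the third scenario it has exactly one of these (or $\{0_{\mathtt{U}^\ast},1_{\mathtt{U}^\ast}\}$). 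Whichever you pick, $p$'s initial state fails to match at least one of the two honest-general worlds, and no amount of adversarial message injection at $t\geq 1$ can repair a difference at $t=0$. The inequalities $\Pr[p\text{ outputs }0]>1-\varepsilon$ and $\Pr[p\text{ outputs }0]<\varepsilon$ therefore do not refer to the same protocol instance.

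The fix is to use \emph{two} honest processors and derive the contradiction from agreement rather than from a single processor's output. The paper does this with three processors $p_0,p_1,p_2$: the adversary controls $p_0$; $p_0$ and $p_1$ each carry half the total resource; $p_2$ has zero resource. First one argues (via the simulation you describe) that, fixing $\mathtt{in}_{p_2}$, the adversary can make any two system inputs indistinguishable for $p_2$, forcing $p_2$ to output its own input with probability $>1-\varepsilon$. Then, separately, one observes that $p_2$ has zero resource and hence cannot broadcast, so instances differing only in $\mathtt{in}_{p_2}$ are indistinguishable for $p_1$; this forces $p_1$ to output \emph{its} own input with probability $>1-\varepsilon$. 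Taking $\mathtt{in}_{p_1}=\{0_{\mathtt{U}^\ast}\}$ and $\mathtt{in}_{p_2}=\{1_{\mathtt{U}^\ast}\}$ then violates agreement. Your argument becomes correct if you replace the single observer $p$ by two zero-resource observers (one in each ``half'') and argue indistinguishability for each separately against its respective honest-general world; the contradiction then comes from agreement in the middle scenario, not from a single processor being forced to two outputs.
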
 
\begin{proof} See Appendix 7.  
\end{proof}

\section{Byzantine Generals with partially synchronous communication}  \label{PSync} 

 We note first that, in this setting, protocols giving a probabilistic solution to BB will not be possible if the adversary is $q$-bounded for $q\geq \frac{1}{3}$ -- this follows easily by modifying the argument presented in \cite{DLS88}, although that proof was given for deterministic protocols in the permissioned setting. 
  For $q<\frac{1}{3}$ and working in the sized setting, there are multiple PoS protocols, such as Algorand,\footnote{For an exposition of Algorand that explains how to deal with the partially synchronous setting, see \cite{chen2018algorand}.} which work successfully when communication is partially synchronous.

The fundamental result with respect to the \emph{unsized} setting with partially synchronous communication is that there is no permissionless protocol giving a probabilistic solution to BB. So, PoW protocols cannot give a  probabilistic solution to BB when communication is partially synchronous.\footnote{Of course, it is crucial to our analysis here that PoW protocols are being modelled in the unsized setting. It is also interesting to understand why Theorem \ref{4.1} does not contradict the results of Section 7 in  \cite{garay2018bitcoin}. In that paper, they consider the form of partially synchronous setting from \cite{DLS88} in which the delay bound $\Delta$ always holds, but is undetermined. In order for the `common prefix property' to hold in Lemma 34 of  \cite{garay2018bitcoin}, the number of blocks $k$ that have to be removed from the longest chain is a function of $\Delta$. When $\Delta$ is unknown, the conditions for block confirmation are therefore also unknown. It is for this reason that the Bitcoin protocol cannot be used to give a probabilistic solution to BB in the partially synchronous and unsized setting.} 

\begin{theorem} \label{4.1} There is no permissionless protocol giving a probabilistic solution to BB in the unsized setting with partially synchronous communication. 
\end{theorem}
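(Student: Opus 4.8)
The plan is to run a partitioning argument of the kind familiar from the CAP theorem \cite{brewer2000towards,gilbert2002brewer} and from the partial‑synchrony lower bound of \cite{DLS88}, using the unsized hypothesis in an essential way. The key structural observation is that in the unsized setting the resource pool $\mathcal{R}$ is \emph{undetermined}, so a non‑faulty processor can learn about $\mathcal{R}$ only through the permitter's replies to its own requests; and, by definition, each such reply is a probabilistic function of the requesting processor's \emph{own} resource balance $\mathcal{R}(\mathtt{U}_p,t,M)$ together with the data $(M,A)$ it submits, never of the global total $\mathcal{T}$ nor of other processors' balances. This locality lets us glue two separate executions into a single one that is, to each side, indistinguishable from its own side, even though the two sides are fed contradictory inputs by a dishonest general. (In the sized setting this splicing would be detectable, since each processor is handed the full function $\mathcal{R}$ as an input — which is exactly why the result is stated for the unsized setting.) Note also that the construction needs no adversary‑controlled processors at all: a traitorous general together with adversarially chosen message delays suffices, so the conclusion holds a fortiori for every $q$‑bounded adversary.

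Suppose, for a contradiction, that $(\mathtt{S},\mathtt{O})$ solves BB with partially synchronous communication in the unsized setting. Fix parameter inputs $\Delta=1$, $\alpha_0=1$, $\alpha_1=2$ (legitimate, since the protocol must function for all $\alpha_0>0$ and $\alpha_1>\alpha_0$), run the protocol with a security parameter $\varepsilon_0$ small enough that $1-4\varepsilon_0>\varepsilon_0$ (e.g.\ $\varepsilon_0=1/6$), and take two disjoint singleton processor sets $H_1=\{p_1\}$ and $H_2=\{p_2\}$. In World~1 only $p_1$ is present, with input $\mathtt{in}_{p_1}=\{0_{\mathtt{U}^{\ast}}\}$, with a resource pool assigning $p_1$ balance $1$ at every timeslot and message state, and with every message delivered with delay $1$; since the general is here honest with value $0$, $p_1$ outputs $0$ with probability $>1-\varepsilon_0$, so there is a time $T_1$ with $\Pr[\,p_1 \text{ has output } 0 \text{ by } T_1\,]>1-2\varepsilon_0$. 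World~2 is symmetric, with $p_2$, input $\{1_{\mathtt{U}^{\ast}}\}$, giving a time $T_2$ with $\Pr[\,p_2 \text{ has output } 1 \text{ by } T_2\,]>1-2\varepsilon_0$; set $T^{\ast}=\max(T_1,T_2)$. In World~3 both $p_1$ and $p_2$ are present, the general is dishonest with $\mathtt{in}_{p_1}=\{0_{\mathtt{U}^{\ast}}\}$ and $\mathtt{in}_{p_2}=\{1_{\mathtt{U}^{\ast}}\}$, the resource pool assigns each of $p_1,p_2$ balance $1$ (so $\mathcal{T}\equiv 2\in[\alpha_0,\alpha_1]$) and, restricted to $p_i$'s identifier and to message states that do not mention the other processor's messages, agrees with the pool of World~$i$; messages within each singleton group travel with delay $1$, while the one cross‑group channel is blocked until time $T^{\ast}+1$, which we also declare to be the stabilisation time, so that this is a legal partially synchronous run.

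It then remains to observe that $p_1$'s local history in World~3 up to time $T^{\ast}$ — the messages it receives, the requests it makes, the inputs on which it queries the permitter, and the states it enters — has exactly the same distribution as in World~1: $\mathtt{O}$ is the same fixed object in all three worlds, $p_1$'s own resource balance is the same function in Worlds~1 and~3 on every message state $p_1$ can actually reach before $T^{\ast}$, and $p_1$ receives no cross‑group message before $T^{\ast}+1$; and symmetrically for $p_2$ and World~2. Hence in World~3 we have $\Pr[\,p_1 \text{ has output } 0 \text{ by } T^{\ast}\,]>1-2\varepsilon_0$ and $\Pr[\,p_2 \text{ has output } 1 \text{ by } T^{\ast}\,]>1-2\varepsilon_0$, so by a union bound both events occur simultaneously with probability $>1-4\varepsilon_0$, and on that event the agreement requirement of BB fails ($0\neq 1$). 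But the general is dishonest in World~3, so solving BB with security parameter $\varepsilon_0$ requires agreement with probability $>1-\varepsilon_0$, i.e.\ disagreement with probability $<\varepsilon_0$, and $1-4\varepsilon_0>\varepsilon_0$ — a contradiction.

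I expect the real work to be in making the indistinguishability step fully rigorous. One must verify: (a) that the three resource pools can be chosen consistently, which uses precisely that $\mathcal{R}$ is undetermined in the unsized setting and that what matters for $p_i$ is only $\mathcal{R}$ restricted to $\mathtt{U}_{p_i}$ on the message states $p_i$ can reach before $T^{\ast}$; (b) that the merged total resource balance stays in $[\alpha_0,\alpha_1]$, which is why we fix $\alpha_1\geq 2\alpha_0$ at the outset; (c) that the spliced message schedule is a legitimate partially synchronous schedule and still admits the required injective delivery function $d_p$; and (d) that, even under the most general reading of a permitter as a distribution over response‑functions sampled once per run, each processor's \emph{marginal} view is preserved, which is all the union bound needs. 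It is worth remarking that, in contrast to Theorem~\ref{detimposs}, weak decentralisation plays no role here, and that the argument is insensitive to the timed/untimed and single/multi‑permitter distinctions, since in every one of those settings the permitter's reply depends only on the caller's own resource balance and the data it submits.
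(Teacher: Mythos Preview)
Your proof is correct and follows essentially the same three-world indistinguishability argument as the paper: isolate two processors with contradictory inputs before the stabilisation time, and use the unsized hypothesis (so that the permitter's reply depends only on the caller's own balance, which is held fixed across worlds) to conclude that each processor's pre-$T^{\ast}$ view in the merged world matches its solo world, forcing disagreement. The only cosmetic difference is that the paper keeps both processors present in all three instances and zeroes out the ``absent'' processor's resource balance (invoking the no-balance-no-voice condition) rather than removing it from the system as you do; your explicit handling of the $[\alpha_0,\alpha_1]$ constraint and the observation that no Byzantine processors are needed are both sound and match the paper's implicit treatment.
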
 
\begin{proof} See Appendix 8.
 \end{proof} 

As stated previously, Theorem \ref{4.1} can be seen as an analog of the CAP
Theorem for our framework.  While
the CAP Theorem asserts that (under the threat of unbounded network
partitions), no protocol can be both available and consistent, it \emph{is}
possible to describe protocols that give a solution to BB in the partially synchronous setting \cite{DLS88}.   
The crucial distinction is that such solutions are not required to give outputs until after the undetermined stabilisation time has passed.   The key
idea behind the proof of Theorem~\ref{4.1} is that, in
the unsized and partially synchronous setting, this 
distinction disappears. Network partitions are now indistinguishable from
waning resource pools. In the unsized setting, the requirement to give an output can therefore force participants to give an output before the stabilisation time has passed.

\section{Concluding comments} \label{conc}

We close with some questions.

\begin{question} What are the results for the timed/untimed, sized/unsized, and the single/multi-permitter settings other than those used to model PoW and PoS protocols? What happens, for example, when communication is partially synchronous and we consider a variant of PoW protocols for which the total resource balance (see Section \ref{PoWPoS}) is determined? 
\end{question}

While we have defined the single-permitter and multi-permitter settings, we didn't analyse the resulting differences in Sections \ref{Sync} and \ref{PSync}. In fact, this is the distinction between PoS and PoW protocols which has probably received the most attention in the previous literature (but not within the framework we have presented here) in the form of the `nothing-at-stake' problem \cite{brown2019formal}. In the framework outlined in Section \ref{framework}, we did not allow for a mobile adversary (who can make non-faulty processors faulty, perhaps for a temporary period). It seems reasonable to suggest that the difference between these two settings becomes particularly significant in the context of a mobile adversary: 

 \begin{question} 
What happens in the context of a mobile adversary, and how does this depend on whether we are working in the single-permitter or multi-permitter settings? Is this a significant advantage of PoW protocols? 
\end{question}

In the framework we have described here, we have followed much of the classical literature in not limiting the length of messages, or the finite number of messages that can be sent in each timeslot. While the imagined network over which processors communicate does have message delays, it apparently has infinite bandwidth so that these delays are independent of the number and size of messages being sent. While this is an appropriate model for some circumstances, in looking to model such things as sharding protocols \cite{zamani2018rapidchain} it will be necessary to adopt a more realistic model:

 \begin{question} \label{bandwidth} 
How best to modify the framework, so as to model limited bandwidth (and protocols such as those for implementing sharding)? 
\end{question}

In this paper we have tried to follow a piecemeal approach, in which new complexities are introduced one at a time. This means that there are a number of differences between the forms of analysis that normally take place in the blockchain literature and in distributed computing that we have not yet addressed. One such difference is that it is standard in the blockchain world to consider a setting in which participants may be late joining. A number of papers \cite{pass2017rethinking,guo2019synchronous} have already carried out an analysis of some of the nuanced considerations to be had here, but there is more to be done: 

 \begin{question} 
What changes in the context of late joining? In what ways is this different from the partially synchronous setting, and how does this relate to Question \ref{bandwidth}? How does all of this depend on other aspects of the setting? 
\end{question}

%

  \bibliographystyle{alpha}
\newcommand{\etalchar}[1]{$^{#1}$}

\section{Appendices}

\subsection{Appendix 1 -- Related Work (Expanded)} 

The Byzantine Generals Problem was introduced in \cite{pease1980reaching,lamport1982byzantine} and has become a central topic in distributed computing.  
Prior to Bitcoin, a variety of papers analysed the Byzantine Generals Problem in settings somewhere between the permissioned and permissionless settings.  For example, Okun \cite{okun2005distributed} considered  certain relaxations of the classical permissioned setting (without resource restrictions of the kind employed by Bitcoin).   In his setting, a fixed number of processors communicate by private channels, but processors may or may not have unique identifiers and might be `port unaware', meaning that they are unable to determine from which private channel a message has arrived. Okun showed that deterministic consensus is not possible in the absence of a signature scheme and unique identifiers when processors are port unaware -- our Theorem \ref{detimposs} establishes a similar result  even when unique identifiers and a signature scheme are available (but without full PKI), and when resource bounds may be used to limit the ability of processors to broadcast.    Bocherdung \cite{borcherding1996levels} considered a setting in which a fixed set of $n$ participants communicate by private channels (without the `port unaware' condition of Okun), and in which a signature scheme is available. Now, however, processors are not made  aware of each others' public keys before the protocol execution. In this setting, he was able to show that Byzantine Agreement is not possible when $n\leq 3f$, where $f$ denotes the number of processors that may display Byzantine failures.  A number of papers \cite{cavin2004consensus,alchieri2008byzantine} have also considered the CUP framework (Consensus amongst Unknown Participants). In the framework considered in those papers, the number and the identifiers of other participants may be unknown from the start of the protocol execution. A fundamental difference with the permissionless setting considered here is that, in the CUP framework,  all participants have a
unique identifier and the adversary is unable to obtain additional identifiers to be able
to launch a Sybil attack against the system, i.e.\ the number of identifiers controlled by the adversary is bounded.

The Bitcoin protocol was first described in 2008 \cite{nakamoto2008bitcoin}. Since then, a number of papers (see, for example, \cite{garay2018bitcoin,WHGSW16,pass2017rethinking,guo2019synchronous}) have considered frameworks for the analysis of PoW protocols. These papers generally work within the UC framework of Canetti \cite{canetti2001universally}, and make use of a random-oracle (RO) functionality to model PoW.  As we explain in Section \ref{framework}, however, a more general form of oracle is required for modelling PoS and other forms of permissionless protocol. With a PoS protocol, for example, a participant's apparent stake (and their corresponding ability to update state) depends on the set of broadcast messages that have been received, and \emph{may therefore appear different from the perspective of different participants} (i.e.\ unlike hashrate, measurement of a user's stake is user-relative).  In Section \ref{framework} we also describe various other modelling differences that are required to be able to properly analyse a range of attacks, such as `nothing-at-stake' attacks, on PoS protocols. We take the approach of avoiding use of the UC framework, since this provides a substantial barrier to entry for researchers in blockchain who do not have a strong background in security.

The idea of blackboxing the process of participant selection as an oracle (akin to our \emph{permitter}, as described in Section \ref{framework}) was explored in \cite{abraham2017blockchain}. Our paper may be seen as taking the same basic approach, and then fleshing out the details to the point where it becomes possible to prove impossibility results like those presented here. As here, a broad aim of \cite{abraham2017blockchain} was to understand the relationship between permissionless and permissioned consensus protocols, but the focus of that paper was somewhat different than our objectives in this paper. While our aim is to describe a framework which is as general as possible, and to establish impossibility results which hold for all protocols,  the aim of  \cite{abraham2017blockchain} was to examine specific permissioned consensus protocols, such as Paxos \cite{lamport2001paxos}, and to understand on a deep level how their techniques for establishing consensus connect with and are echoed by Bitcoin. 

In \cite{garay2020resource}, the authors considered a framework with similarities to that considered here, in the sense that ability to broadcast is limited by access to a restricted resource.  In particular, they abstract the core properties that the resource-restricting paradigm offers by means of a \emph{functionality wrapper}, in the UC framework, which when applied to a standard point-to-point network restricts the ability to send new messages. Once again, however, the random oracle functionality they consider is appropriate for modelling PoW rather than PoS protocols, and does not reflect, for example,  the sense in which resources such as stake can be user relative (as discussed above), as well as other significant features of PoS protocols discussed in Section \ref{PoWPoS}. So, the question remains as to how to model and prove impossibility results for PoS, proof-of-space and other permissionless protocols in a general setting.

In \cite{terner2020permissionless}, a model is considered which carries out an analysis somewhat similar to that in \cite{garay2018bitcoin}, but which blackboxes all probabilistic elements of the process  by which processors are selected to update state. Again, the model provides a potentially useful way to analyse PoW protocols, but fails to reflect PoS protocols in certain fundamental regards. In particular, the model does not reflect the fact that stake is user relative (i.e.\  the stake of user $x$ may appear different from the perspectives of users $y$ and $z$). The model also does not allow for analysis of the `nothing-at-stake' problem, and does not properly reflect timing differences that exist between PoW and PoS protocols, whereby users who are selected to update state may delay their choice of block to broadcast upon selection. These issues are discussed in more depth in Section \ref{framework}.

As stated in the introduction, Theorem \ref{4.1} can be seen as a recasting of the CAP Theorem \cite{brewer2000towards,gilbert2002brewer} for our framework.  CAP-type theorems have previously been shown for various PoW frameworks \cite{pass2017rethinking,guo2019synchronous}. In  \cite{pass2017rethinking}, for example, a framework for analysing PoW protocols is considered, in which $n$ processors participate and where the  number of participants controlled by the adversary depends on their hashing power. It was shown that if the protocol is unsure about the number of participants to a factor of 2 and still needs to provide availability if between $n$ and $2n$ participants show up, then it is not possible to guarantee consistency in the event of network partitions.

Of course, a general framework is required to be able to provide negative  (impossibility) results of the sort presented here in Sections \ref{Sync}  and \ref{PSync}. In those sections, and in Appendix 7, we also describe how existing positive results fit into the narrative, as well as outlining some of the most significant remaining open questions.

\subsection{Appendix 2 -- Table 1}

\begin{table}[h!] 
\begin{tabular}{l|l}
term & meaning \\
\hline \hline

$\Delta $     &                    bound on message delay \\ 
 $\mathtt{I} $    &             a protocol instance  \\ 
 $m$   &              a message \\
 $M $      &                        a set of messages \\
 $\mathcal{M}$   &         the set of all possible sets of messages \\ 
 $\mathtt{O}$      &               a permitter oracle  \\ 
$p$ & a processor \\
$P$    &                          a permission set \\ 
$\mathtt{P}$     &               a permissionless protocol  \\ 
$R$    &                          a request set  \\ 
$\mathcal{R}$   &          the resource pool \\ 
$\mathtt{S}$     &              a state transition diagram  \\ 
$t$           &                 a timeslot \\
$(t,M,A)$      &  a request in the timed setting  \\ 
$\mathtt{T} $      &               a timing rule \\ 
$(M,A)$     &  a request in the untimed setting  \\ 
$\mathcal{U}$ &          the set of all identifiers \\ 
$\mathtt{U}_p$ &      the identifier for $p$ \\

\end{tabular}
\caption{Some commonly used variables and terms.} 
\label{terms} 
\end{table}

\subsection{Appendix 3 -- Timing rules and the definitions of the synchronous, partially synchronous and authenticated settings} \label{appendix 2} 

The synchronicity settings we consider with regard to message delivery are just the standard settings introduced in  \cite{DLS88}. In the \emph{synchronous} setting it holds for some determined $\Delta\geq 1$, for all $p_1\neq p_2$ and all $t$, that if $p_1$ broadcasts a message $m$ at timeslot $t$,  then $p_2$ receives $m$ at some timeslot in  $(t,t+\Delta ]$. In the \emph{partially synchronous} setting, there exists an undetermined stabilisation time $T$ and determined $\Delta \geq 1$, such that, for all $p_1\neq p_2$ and all $t\geq T$, if $p_1$ broadcasts a message $m$ at timeslot $t$,  then $p_2$ receives $m$ at some timeslot in  $(t, t+ \Delta]$.
For the sake of simplicity (and since we consider mainly impossibility results), we suppose in this paper that each processor takes one step at each timeslot, but it would be easy to adapt the framework to deal with partially synchronous processors as in  \cite{DLS88}.

It is also useful to consider the notion of a \emph{timing rule}, by which we mean a partial function $\mathtt{T}$ mapping tuples of the form $(p,p',m,t)$ to timeslots. We say that a run follows the timing rule $\mathtt{T}$ if  the following holds for all processors $p$ and $p'$:  We have that $p'$ receives $m$ at $t'$  iff there exists some $p$ and $t<t'$ such that $p$ broadcasts the message $m$ at  $t$ and $\mathtt{T}(p,p',m,t)\downarrow =t'$.\footnote{Note that a single timing rule might result in many different sequences of messages being received, if different sequences of messages are broadcast. } We restrict attention to timing rules  which are consistent with the setting.  So the timing rule just specifies how long messages take to be received by each processor after broadcast.

 In the \emph{authenticated} setting, we make the following modification to the definitions of Section \ref{RP}: The response $M^{\ast}$ of the permitter to a request $r$ made by $p$ is now allowed to be a probabilistic function also of $\mathtt{U}_p$ (as well as the determined variables, $r$ and the resource balance, as previously). Then we consider an extra filter on the set of messages that are permitted for $p$:  If $m$ is to be permitted for $p$ at $t$, $(m,t)$ must belong to some permission set $M^{\ast}$ that has previously been received by $p$ \emph{and}  must  satisfy the condition that for any ordered pair of the form $(\mathtt{U}_{p'},m')$ contained in $m$ with $\mathtt{U}_{p'}\in \mathcal{U}$, either $p=p'$, or else $(\mathtt{U}_{p'},m')$ is contained in a message that has been received by $p$.\footnote{Formally, messages and identifiers are strings forming a prefix-free set, i.e.\ such that no message or identifier is an initial segment of another. For strings $\sigma$ and $\tau$, we say $\sigma$ is contained in $\tau$ if $\sigma$ is a substring of $\tau$, i.e.\ if there exist (possibly empty) strings $\rho_0$ and $\rho_1$ such that $\tau$ is the concatenation of $\rho_0$, $\sigma$ and $\rho_1$.} The point of these definitions is to be able to model the authenticated setting within an information-theoretic state transition model. We write $m_{\mathtt{U}}$ to denote the ordered pair $(\mathtt{U},m)$, thought of as `$m$ signed by $\mathtt{U}$'. In the \emph{unauthenticated} setting, the previously described modifications do not apply, so that $m$ is permitted for $p$ at $t$ whenever $(m,t)$ belongs to some permission set  that has previously been received by $p$.

\subsection{Appendix 4 -- Modelling PoW and PoS protocols} 

PoW protocols will generally be best modelled in the untimed, unsized and single-permitter settings. They are best modelled in the untimed setting, because a processor's probability of being granted permission to broadcast a block at timeslot $t$ (even if that block has a different timestamp) depends on their resource balance at $t$, rather than at any other timeslot. They are best modelled in the unsized setting, because one does not know in advance of the protocol execution the amount of mining which will take place at a given timeslot in the future. They are best modelled in the single-permitter setting, so long as permission to broadcast is block-specific. 

PoS protocols are best modelled in the timed, sized and multi-permitter settings. They are best modelled in the timed setting, because blocks will generally have non-manipulable timestamps, and because a processor's ability to broadcast a block may be determined at a timestamp $t$ even through the probability of success depends on their resource balance at $t'$ other than $t$.  They are best modelled in the sized setting, because the resource pool is known from the start of the protocol execution. They are best modelled in the multi-permitter setting, so long as permission to broadcast is not block-specific, i.e.\ when permission is granted, it is to broadcast a range of permissible blocks at a given position in the blockchain. One further difference is that PoS permitter oracles would seem to require the authenticated setting for their implementation, while PoW protocols might be modelled as operating in either the authenticated or unauthenticated settings -- we do not attempt to \emph{prove} any such fact here, and indeed our framework is not appropriate for such an analysis.

\subsection{Appendix 5 -- The adversary and the meaning of probabilistic statements}  \label{adv}  
In the permissionless setting, we are generally most interested in dealing with Byzantine faults, normally thought of as being carried out with malicious intent by an \emph{adversary}. The adversary controls a fixed set of faulty processors - in formal terms, the difference between faulty and non-faulty processors is that the state transition diagram for faulty processors might not be $\mathtt{S}$, as specified by the protocol. In this paper, we consider a static (i.e. non-mobile) adversary that controls a set of processors that is fixed from the start of the protocol execution so as to give the strongest possible form of our impossibility results. 

To model an adversary that is able to perfectly co-ordinate the processors it controls and delay the broadcast of messages, we also make the following changes to the definitions of previous sections. Let $P$ be the set of processors, and let $P_A$ be the set of processors controlled by the adversary. If $p\in P_A$ then: 
\begin{itemize} 
\item At timeslot $t$, $p$'s next state $x'$ is allowed to depend on (the present state $x$ and) messages and permission sets received at $t$ by all $p'\in P_A$ (rather than just those received by $p$). 
\item If $p$ makes a request $(M,A)$ or $(t,M,A)$, the only requirement on $M$ is that all $m\in M$ must be permitted for, or else have been received or broadcast by, some $p'\in P_A$ at a timeslot $t'\leq t$. 
\item The message $m$ is permitted for $p$ at $t$ if there exists some $t'\leq t$ such that $(m,t')$ belongs to a permission set previously received by some processor in $P_A$.
\end{itemize} 

Since protocols will be expected to behave well with respect to all timing rules  consistent with the setting (see Appendix 3 for the definition of a timing rule), it will sometimes be useful to \emph{think of} the adversary as also having control over the choice of timing rule. 

 Placing bounds on the power of the adversary in the permissionless setting means limiting their resource balance. For $q\in [0,1]$, we say the adversary is $q$-\emph{bounded} if their total resource balance is always\footnote{In the context of discussing PoS protocols, an objection that may be raised to simply assuming the adversary is $q$-bounded (for some $q<1$), is that there may be attacks such as `stake bleeding' attacks \cite{gavzi2018stake}, which allow an adversary with lower resource balance to achieve a resource balance $>q$ (at least, relative to certain message states). A simple approach to dealing with this issue is to maintain the assumption that the adversary is $q$-bounded, but then to add the existence of certain message states (e.g.\  those conferring too great a proportion of block rewards to the adversary) to the set of other failure conditions (such as the existence of incompatible confirmed blocks if one is analysing a blockchain protocol).} at most a $q$ fraction of the total, i.e.\ for all $M,t$,  $\sum_{p\in P_A} \mathcal{R}(\mathtt{U}_p,t,M)\leq q\cdot \sum_{p\in P} \mathcal{R}(\mathtt{U}_p,t,M) $. 

For a given protocol, another way to completely specify a run (beyond that described in Section \ref{cm}) is via the following breakdown: (1) The set of processors and their inputs;
(2) The set of processors controlled by the adversary, and their state transition diagrams;
(3) The timing rule;
(4) The resource pool (which may or may not be undetermined); 
(5) The probabilistic responses of the permitter. 

When we say that a protocol satisfies a certain condition (such as solving the Byzantine Generals Problem), we mean that this holds for all values of (1)-(5) above that are consistent with the setting. We call a set of values for (1)-(4) above a \emph{protocol instance}. When we make a probabilistic statement\footnote{Thus far we have assumed that it is only the permitter oracle that may behave probabilistically. One could also allow that state transitions may be probabilisitic without any substantial change to the presentation.} to the effect that a certain condition holds with at most/least a certain probability, this means that the probabilistic bound holds for all protocol instances that are consistent with the setting. 

We can allow some flexibility with regard to what it means for the permitter oracle to be a `probabilisitic function'. To describe a permitter oracle in the most general form, we can suppose that $\mathtt{O}$ is actually a distribution on the set of functions which specify a distribution on outputs for each input (the input being specified by the determined variables,   $(M,A)$, and
$\mathcal{R}(\mathtt{U}_p,t,M)$). Then we can suppose that one such function $\mathcal{O}$ is sampled according to the distribution specified by $\mathtt{O}$ at the start of each run, and that, each time a request is sent to the permitter oracle, a response is independently sampled according to the distribution specified by $\mathcal{O}$. This allows us to model both permitter oracles that give independent responses each time they are queried, and also permitter oracles that randomly select outputs but give the same response each time the same request is made within a single run.

\subsection{Appendix 6 -- The proof of Theorem 1.} 

Towards a contradiction, suppose we are given $(\mathtt{S},\mathtt{O})$ which is a deterministic  permissionless protocol solving BB for a $q$-bounded adversary. 
 Consider an infinite set of processors $P$, and suppose $p_0,p_1\in P$.  If $(\mathtt{S},\mathtt{O})$ solves BB, then it must do so for all protocol and parameter inputs consistent with the setting. So, fix a set of parameter inputs for all processors with $\Delta=2$, and fix $\mathcal{R}$ satisfying the condition that $\mathcal{R}(\mathtt{U}_{p_i},t,M)=0$ for all $i\in \{ 0, 1 \}$ and for all $t,M$ (while $\mathcal{R}$ takes arbitrary values amongst those consistent with the setting for other processors) -- the possibility of two processors with zero resource balance throughout the run is not really important for the proof, but simplifies the presentation. 
 
 We consider runs in which the only faulty behaviour is to delay the broadcast of messages, perhaps indefinitely. Given that all faults are of this kind, it will be presentationally convenient to think of all processors as having the state transition diagram specified by $\mathtt{S}$, but then to allow that the adversary can intervene to delay the broadcast of certain messages for an undetermined set of processors (causing certain processors to deviate from their `instructions' in that sense). By a \emph{system input}, we mean a choice of protocol input for each processor in $P$. We let $\Pi$ be the set of all possible runs given this fixed set of parameter inputs, given the fixed value of $\mathcal{R}$, and given the described restrictions on the behaviour of the adversary.   To specify a run $\mathtt{R}\in \Pi$ it therefore suffices to specify the system input, which broadcasts are delayed and for how long, and when broadcast messages are received by each processor (subject to the condition that $\Delta=2$). 
  
  \vspace{0.2cm} 
 \noindent \textbf{Proof Outline.}    By a $k$\emph{-run}, we mean the first $k$ timeslots of a run $\mathtt{R}\in \Pi$.  The proof outline then breaks down into the following parts:
\begin{enumerate} 
\item[(P1)]  We show there exists $k$ such that  $p_0$ and $p_1$ give an output within the first $k$ timeslots of all $\mathtt{R}\in \Pi$. 
\item[(P2)] Let $k$ be as given in (P1). We produce $\zeta_0,\dots,\zeta_m$, where each $\zeta_j$ is a  $k$-run, and such that: (a) All processors have protocol input $ \{ 0_{\mathtt{U}^{\ast}} \}$ in $\zeta_0$; (b) For each $j$ with $0\leq j <m$, there exists $i\in \{ 0,1 \}$, such that $\zeta_j$ and $\zeta_{j+1}$ are indistinguishable from the point of view of $p_i$; 
(c) All processors have protocol input $ \{ 1_{\mathtt{U}^{\ast}} \}$ in $\zeta_m$. \end{enumerate} 
From (P2)(a) above, it follows that in $\zeta_0$, processors $p_0$ and $p_1$ must both output 0. Repeated applications of (P2)(b) then suffice to show that  $p_0$ and $p_1$ must both output 0 in all of the $k$-runs $\zeta_0,\dots,\zeta_m$ (since they must each give the same output as the other). This contradicts (P2)(c), and completes the proof. 

  \vspace{0.2cm} 
 \noindent \textbf{Establishing (P1).} 
Towards establishing (P1) above, we first prove the following technical lemma. For each $t$, let $M_t$ be the set of messages $m$ for which there exists $\mathtt{R}\in \Pi$ in which $m$ is broadcast at a timeslot $t'\leq t$. Let $Q_t$ be the set of $p$ for which there is some $\mathtt{R}\in \Pi$ in  which $p$ receives at least one non-empty permission set $M^{\ast}$ by the end of stage $t$. Let $B_t$  be the set of $p$ for which there is some $\mathtt{R}\in \Pi$ in  which  $p$ is (permitted and) instructed to broadcast a message at $t$.

\begin{lemma} \label{finite} For each $t$, $M_t$, $Q_t$ and $B_t$ are finite. 
\end{lemma}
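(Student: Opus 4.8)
The plan is to prove all three statements by a single induction on $t$, the engine being the claim ``$M_t$ is finite'': knowing that only finitely many messages can have been broadcast by time $t-1$ bounds which message states can arise in any run in $\Pi$ up to time $t$, and hence, via the finiteness property (b) of the resource pool, which processors can become ``active'' by time $t$; determinism of $\mathtt{S}$ and of the permitter $\mathtt{O}$ then bounds what each such processor can do. The base case $t=0$ is trivial, since nothing has yet been broadcast or permitted. For the inductive step I assume $M_{t-1}$ is finite, let $I$ be the (finite) set of messages occurring in some processor's protocol input, and set $\tilde{M} := I \cup M_{t-1}$, a finite set.

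First I would bound $Q_t$ and $B_t$. The key point is that if $p \in Q_t$ then, in the witnessing run, the request that first gives $p$ a non-empty permission set satisfying the timestamp condition must have message-component $M \subseteq \tilde{M}$: before that request $p$ has no permitted messages and, being unable to broadcast anything without a prior permission, has message state built only from its input messages and from messages received from others, all of which were broadcast before time $t$ and so lie in $M_{t-1}$. By the ``no balance, no voice'' clause of $\mathtt{O}$ this forces $\mathcal{R}(\mathtt{U}_p, \sigma, M) \neq 0$ for some $M \subseteq \tilde{M}$ and some $\sigma \le t$ (in the untimed setting $\sigma$ is the timeslot of the request, hence $\le t-1$; in the timed setting $\sigma = t_m$ for the witnessing message $m$, and $t_m \le t$ by the definition of $Q_t$). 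As $(\sigma, M)$ ranges over the finite set $\{0,\dots,t\} \times \mathcal{P}(\tilde{M})$ and, by property (b), each such pair admits only finitely many identifiers with non-zero resource balance, $Q_t$ is finite. Since a processor instructed to broadcast at $t$ must, by weak decentralisation, have received such a non-empty permission set by time $t$, we get $B_t \subseteq Q_t$, so $B_t$ is finite.

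For $M_t$: messages broadcast before time $t$ already lie in $M_{t-1}$, so it suffices to bound the messages broadcast at time $t$, and these come only from processors in the finite set $B_t$. It therefore remains to show that a fixed processor $p$ broadcasts only finitely many distinct messages at time $t$ over all runs in $\Pi$. Here I would run a nested induction on $s \le t$ showing that the set of states $p$ can occupy at the start of timeslot $s$, and the set of pairs $(M, M^{\ast})$ it can receive at $s$, are both finite: the message set received at $s$ is a subset of the finite set $M_{s-1}$; the permission set received at $s$ is the value of the deterministic permitter on one of the finitely many request sets $p$ could issue at $s-1$ (themselves determined by $p$'s finitely many possible states and received pairs at $s-1$); and the state at $s$ is a function of the state and received pair at $s-1$. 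Since $\mathtt{S}$ prescribes a finite broadcast set as a function of state and received pair, $p$ broadcasts only finitely many distinct messages at $t$, and $M_t$ is finite.

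The hard part is the recursion through permission sets. Because permitter responses may be infinite sets (as in PoS-style modelling), one cannot bound the messages broadcast in a step by ``finitely many responses, each finite''; one must instead bound the finite set the processor actually selects, which is precisely why the whole argument leans on $\mathtt{S}$ and $\mathtt{O}$ being deterministic (matching the hypothesis of Theorem \ref{detimposs}) and why the compactness-style nested induction over the space of runs is needed. A secondary subtlety, requiring care, is the timed setting: a processor may acquire ``look-ahead'' permissions for far-future timeslots and could in principle feed the (possibly infinite) contents of such a permission set into a later request, so one must verify that the first-activation argument bounding $M \subseteq \tilde{M}$ is not disrupted by this — exploiting, if necessary, the freedom available in an impossibility argument to fix the resource pool $\mathcal{R}$ conveniently.
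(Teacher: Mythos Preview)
Your approach is essentially the paper's: induction on $t$, driven by the observation that determinism of both $\mathtt{S}$ and $\mathtt{O}$ makes a processor's entire trajectory (state, request set, permission set) a function of its inputs and the sequence of message sets received, together with property~(b) of the resource pool to bound $Q_t$. Your nested induction over $s \le t$ is exactly the paper's claim $(\diamond_0)$, spelt out.

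The one structural difference is the order in which you deploy these ingredients. You bound $Q_t$ first via a ``first activation'' argument (the first request yielding a qualifying permission has $M \subseteq \tilde{M}$), and only afterwards run the state-finiteness induction to bound $M_t$. The paper does it the other way round: it first establishes $(\diamond_0)$ for the inductively finite set $Q_{t-1}$, deduces that only finitely many distinct $M$ can occur in \emph{any} request at timeslots $<t$, and from that gets $Q_t$ finite. Your order forces you into the first-activation shortcut, and you correctly flag its weak point in the timed setting: a processor not yet in $Q_{t-1}$ may already hold permissions for large timestamps and feed those (possibly infinitely many) messages into the relevant request, so ``before that request $p$ has no permitted messages'' is not literally true. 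The paper's order sidesteps this, because once the state at each step is known to lie in a finite set, the request actually issued --- and in particular its $M$-component --- is pinned down regardless of how large the ambient permission set is. Your nested induction is the right tool; it just needs to be invoked \emph{before} bounding $Q_t$ rather than after.
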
 
\begin{proof} If $Q_t$ is finite, then clearly $B_t$ must be finite. The proof for $M_t$ and $Q_t$ is by induction on $t$. 
 At $t=1$, no processor has yet been permitted to broadcast any messages. 
 
Suppose $t>1$ and that the induction hypothesis holds for all $t'<t$. The state of each of the finitely many processors $p\in Q_{t-1}$ at the end of timeslot $t-1$ is dictated by the protocol inputs for $p$, and by the messages $p$ receives at each timeslot $t'<t$. It therefore follows\footnote{Recall that we consider faulty processors to follow the same state transition diagram as non-faulty processors, but to have certain broadcasts delayed in contravention of those instructions. The state of a faulty processor is thus determined in the same way as that of a non-faulty processor.} from the induction hypothesis that:

\begin{enumerate} 
\item[$(\diamond_0)$]  There are only finitely many states that the processors in $Q_{t-1}$ can be in at any timeslot $t'<t$.
\end{enumerate} 
Recall from Section \ref{RP}, that if $p$ makes a request $(M,A)$, or $(t',M,A)$,  then every $m\in M$ must either be in the message state of $p$, or else be permitted for $p$.\footnote{In Section \ref{adv}, we loosened these requirements for the adversary. They still hold here, though, since we assume that the only faulty behaviour of processors controlled by the adversary is to delay the broadcast of certain messages.} By the induction hypothesis for $M_{t-1}$ and from $(\diamond_0)$, it therefore follows that there are only finitely many different possible $M$ for which requests $(M,A)$ or $(t',M,A)$ can be made by processors at timeslots $<t$. The fact that $Q_t$ is finite then follows, since for each $t'$ and $M$, there are finitely many $p$ for which $\mathcal{R}(\mathtt{U}_p,t',M)\neq 0$.
For each non-faulty $p\in Q_t$, the finite set of messages $p$ broadcasts at timeslot $t$ is decided by the protocol inputs for $p$ and by the messages $p$ receives at each timeslot $t'<t$. Since there are only finitely many possibilities for these values, it follows that there are only a finite set of messages that can be broadcast by non-faulty processors at timeslot $t$. 
The messages that can be broadcast by faulty processors at $t$ are just those that can be broadcast by non-faulty processors at timeslots $t'\leq t$. So $M_t$ is finite, as required for the induction step. 
\end{proof}

\noindent \emph{Continuing with the proof of Theorem 1: The application of K\"{o}nig's Lemma.} Our next aim is to use Lemma \ref{finite} to establish (P1) via an application of K\"{o}nig's Lemma. To do this, we want to show that the runs in $\Pi$ are \emph{in some sense} finitely branching, i.e.\ that there are essentially finitely many different things that can happen at each timeslot. The difficulty is that there are infinitely many possible system inputs and, when $m$ is broadcast at $t$,  there are infinitely many different sets of processors that could receive $m$ at $t+1$ (while the rest receive $m$ at $t+2$). To deal with this, we first define an appropriate partition of the processors. Then we will further restrict $\Pi$, by insisting that some elements of this partition act as a collective unit with respect to the receipt of messages. 

For each $t\in \mathbb{N}$, let $B_t$ be defined as above. Let $B_{\infty}=P-\cup_t B_t$, and let $B_{\infty}^0, B_{\infty}^1$ be a partition of $B_{\infty}$, such that $p_i\in B_{\infty}^i$. 
From now on, we further restrict $\Pi$, by requiring all processors in each $B_{\infty}^i$ to have the same protocol input, and to receive the same message set at each timeslot (we \emph{do not} make the same requirement for each $B_t$, $t\in \mathbb{N}$). As things stand, $B^0_{\infty}, B^1_{\infty}, B_1,B_2, \dots$ need not be a partition of $P$, though,   because processors might belong to multiple $B_i$. We can further restrict $\mathcal{R}$ to rectify this. If $\mathcal{R}(\mathtt{U},t,M)>0$, then we write $\mathtt{U}\in S(t,M)$, and say that $\mathtt{U}$ is \emph{in the support of} $(t,M)$. Roughly, we  restrict attention to $\mathcal{R}$ which has disjoint supports. More precisely, we assume: 
\begin{enumerate} 
\item[ $(\diamond_1)$ ]  For all $t\neq t'$ and all $M,M'$, $S(t,M)\cap S(t',M')=\emptyset$.
\end{enumerate} 
We will also suppose, for the remainder of the proof, that: 
\begin{enumerate} 
\item[$(\diamond_2)$]  No single identifier has more than a $q$ fraction of the total resource balance corresponding to any given $(t,M)$, i.e.\ for any given $\mathtt{U},t,M$, we have $\mathcal{R}(\mathtt{U},t,M)\leq q\cdot \sum_{\mathtt{U}'\in \mathcal{U}} \mathcal{R}(\mathtt{U}',t,M)$. 
\end{enumerate}

With the restrictions on $\mathcal{R}$ described above, $B^0_{\infty}, B^1_{\infty}, B_1,B_2, \dots$ is a partition of $P$ (only $(\diamond_1)$ is required for this). By a \emph{$t$-specification} we mean, for some $\mathtt{R}\in \Pi$: 
\begin{itemize}
\item A specification of the protocol inputs for processors in \newline $B^0_{\infty}, B^1_{\infty}, B_1,B_2, \dots B_t$. 
\item A specification of which messages are broadcast by which processors at timeslots $\leq t$, and when these messages are received by the processors in $B^0_{\infty}, B^1_{\infty}, B_1,B_2, \dots B_t$. 
\end{itemize}  
If $\eta$ is a $t$-specification, and $\eta'$ is a $t'$-specification for $t'\geq t$, then we say $\eta'\supseteq \eta$ if the protocol inputs and message sets specified by $\eta$ and $\eta'$ are consistent, i.e.\ there is no processor $p$ for which they specify a different protocol input, or for which they have messages being received or broadcast at different timeslots. 
We also extend this notation to runs in the obvious way, so that we may write $\eta \subset \mathtt{R}$, for example. Now, by Lemma \ref{finite}, there are finitely many $t$-specifications for each $t\in \mathbb{N}$. Let us say that a $t$-specification $\eta$ is \emph{undecided} if either of $p_0$ or $p_1$ do not give an output during the first $t$ timeslots during some run $\mathtt{R}\supset \eta$.   Towards a contradiction, suppose that there is no upper bound on the $t$ for which there exists a $t$-specification which is undecided. Then it follows directly from K\"{o}nig's Lemma that there exists an infinite sequence $\eta_1,\eta_2,\dots$, such that each $\eta_i$ is undecided, and $\eta_i\subset \eta_{i+1}$ for all $i\geq 1$. Let $\mathtt{R}\in \Pi $ be the unique run with $\eta_i\subset \mathtt{R}$ for all $i$. Then $\mathtt{R}$ is a run in which at least one of $p_0$ or $p_1$ does not give an output. This gives the required contradiction, and suffices to establish (P1). 

  \vspace{0.2cm} 
 \noindent \textbf{Establishing (P2).} To complete the proof, it suffices to establish (P2).  For the remainder of the proof, we let $k$ be as given by (P1). 
  We also further restrict $\Pi$ by assuming that,  for $\mathtt{R}\in \Pi$, each protocol input $\mathtt{in}_p$ is either $ \{ 0_{\mathtt{U}^{\ast}} \}$ or  $\{ 1_{\mathtt{U}^{\ast}} \}$, and that when a processor delays until $t'$ the broadcast of a certain message that it is instructed to broadcast at timeslot $t$, it does the same for all messages that it is instructed to broadcast at $t$. 

It will be convenient to define a new way of specifying $k$-runs. To do so, we will assume that, unless explicitly stated otherwise: (i) Each protocol input $\mathtt{in}_p = \{ 0_{\mathtt{U}^{\ast}} \}$; (ii) Messages are broadcast as per the instructions given by $\mathtt{S}$, and; (iii) Broadcast messages are received at the next timeslot. So, to specify a $k$-run, all we need to do is to specify the deviations from these `norms'. More precisely, for $\mathtt{R}\in \Pi$, we let $\zeta(\mathtt{R})$ be the set of all tuples $q$ such that, for some $t< k$, either: 
\begin{itemize} 
\item $q=(p)$ and $\mathtt{in}_p = \{ 1_{\mathtt{U}^{\ast}} \}$ in $\mathtt{R}$, or; 
\item $q=(p,p')$ and, in $\mathtt{R}$, the non-empty set of messages that $p$ broadcasts at $t$ are all received by $p'$ at $t+2$. 
\item $q=(p,t')$ and, in $\mathtt{R}$, the non-empty set of messages that $p$ is instructed to broadcast at $t$ are all  delayed for broadcast until  $t'>t$. 

\end{itemize}  
If $\zeta=\zeta(\mathtt{R})$ for some $\mathtt{R}\in \Pi$, we also identify $\zeta$ with the $k$-run that it specifies, and refer to $\zeta$ as a $k$-run. We say $p$ is \emph{faulty in} $\zeta$ if there exists some tuple $(p,t')$ in $\zeta$, i.e.\ if $p$ delays the broadcast of some messages.    For $i\in \{ 0,1 \}$, we say $\zeta$ and $\zeta'$ are \emph{indistinguishable for} $p_i$, if $p_i$ has the same protocol inputs in $\zeta$ and $\zeta'$ and receives the same message sets at each $t\leq k$. We say that any sequence  $\zeta_0,\dots, \zeta_m$ is \emph{compliant} if the following holds for each   $j$ with $0\leq j \leq m$: 
\begin{enumerate} 
\item[(i)] If $j<m$, there exists $i\in \{ 0,1 \}$ such that $\zeta_j$ and $\zeta_{j+1}$ are indistinguishable for $p_i$. 
\item[(ii)] For each $t$ with $1\leq t \leq k$, there exists at most one processor in $B_{t}$ that is faulty in $\zeta_j$. 
\end{enumerate} 
It follows from $(\diamond_1)$ and $(\diamond_2)$ that satisfaction of (ii) suffices to ensure each $\zeta_j$ is a $k$-run, i.e. that it actually specifies the first $k$ timeslots of a run in $\Pi$ (for which the adversary is thus $q$-bounded).

The following lemma  completes the proof. 

\begin{lemma} \label{P2} There exists  $\zeta_0,\dots,\zeta_m$ that is compliant, and such that: (a) $\zeta_0=\emptyset$;  (b) For all $p\in P$, $(p)\in\zeta_m$.
\end{lemma}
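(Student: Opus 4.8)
The plan is to build the compliant sequence by adapting the chain-of-executions argument of Dolev and Strong \cite{dolev1983authenticated}, the new ingredient being that in the permissionless setting the adversary may use a \emph{fresh} identifier at each timeslot, so that one faulty processor is available ``for free'' at each of the $k$ timeslots without ever violating compliance condition (ii). The goal is to flip the protocol input of every processor from $0_{\mathtt{U}^\ast}$ to $1_{\mathtt{U}^\ast}$, one processor (or one batch of never-broadcasting processors) at a time, each step changing the $k$-run in a way undetectable to at least one of $p_0,p_1$.

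First I would dispose of the processors that never broadcast inside a $k$-run, namely $B^0_\infty\cup B^1_\infty\cup\bigcup_{t>k}B_t$ (recall the $B_t$ are pairwise disjoint by $(\diamond_1)$, and $p_i\in B^i_\infty$). Flipping the inputs of all of $B^0_\infty\cup\bigcup_{t>k}B_t$ in a single step changes neither $p_1$'s input nor any message set $p_1$ receives within the first $k$ timeslots (these processors do not broadcast inside the window, so they trigger no cascade reaching $p_1$), hence it is indistinguishable for $p_1$; a second step flips all of $B^1_\infty$ and is, symmetrically, indistinguishable for $p_0$. After these two steps the only processors still holding input $0_{\mathtt{U}^\ast}$ lie in the finite set $A:=\bigcup_{t=1}^{k}B_t$ (finite by Lemma~\ref{finite}).

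For the active processors I would proceed timeslot by timeslot, in decreasing order $B_k,B_{k-1},\dots,B_1$, and within each $B_s$ one processor at a time. To flip the input of $q\in B_s$ I would use a ``routing chain'': a sequence of auxiliary processors $q=q^{(s)}\in B_s,\ q^{(s+1)}\in B_{s+1},\dots$, made faulty in turn, and a sub-sequence of $k$-runs in which the new (input-$1$) behaviour of $q$ is first revealed only to $q^{(s+1)}$, then forwarded only to $q^{(s+2)}$, and so on. Since $q^{(u)}$ broadcasts only at timeslot $u$ and a message broadcast at $u$ is received no earlier than $u+1$, after at most $k-s$ such steps one reaches a $k$-run in which no \emph{non-faulty} processor — in particular neither $p_0$ nor $p_1$ — has received anything affected by $q$'s input; $q$'s input can then be flipped in one further step invisible to everyone inside the window, and reversing the routing chain restores a ``clean'' $k$-run with $q$'s input now $1_{\mathtt{U}^\ast}$. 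Each step of a routing chain alters the behaviour of a single faulty processor, and by choosing the one-timeslot transit delays appropriately this alteration changes the message sets received inside the window by at most one of $p_0,p_1$, so compliance (i) holds step by step; at any moment the only faulty processors are those on one routing chain, which meets each $B_t$ in at most one element, so with $(\diamond_1)$ and $(\diamond_2)$ every intermediate $\zeta_j$ is a genuine $k$-run for a $q$-bounded adversary, giving (ii). Concatenating all of these sub-sequences from $\zeta_0=\emptyset$ yields a compliant $\zeta_0,\dots,\zeta_m$ with $(p)\in\zeta_m$ for every $p\in P$.

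The step I expect to be the main obstacle is exactly the verification that each elementary move of a routing chain is indistinguishable for $p_0$ or for $p_1$. In the permissioned Dolev--Strong setting this is immediate, because a faulty party may send its next message to an arbitrarily chosen subset of recipients; here the same effect must be synthesised from the only primitives available — a one-timeslot transit delay of a broadcast to a chosen recipient, and a Byzantine processor delaying its own broadcast — and one must control the \emph{cascade} of influence that a retimed message sets off among the finitely many processors active within the window, checking that it reaches at most one of $p_0,p_1$ by timeslot $k$. It is precisely here that the finiteness of $k$ (so the routing chain must terminate), the disjointness of supports $(\diamond_1)$, condition $(\diamond_2)$, and the availability of a fresh identifier at each timeslot are all used together; making this causal bookkeeping precise is the bulk of the argument.
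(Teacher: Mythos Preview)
Your overall plan matches the paper's: start from $\zeta_0=\emptyset$, end with every input flipped to $1_{\mathtt{U}^\ast}$, and connect the two by a chain each of whose steps is invisible to one of $p_0,p_1$. Your handling of the processors that never broadcast inside the $k$-window is also correct, and is essentially what the paper does (there it is collapsed into a single final step rather than two initial ones).

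The gap is in your routing chain for an active processor $q\in B_s$. You select \emph{one} auxiliary $q^{(u)}\in B_u$ per later timeslot $u>s$, but nothing in the setup bounds $|B_u|$; condition $(\diamond_2)$ in fact forces several identifiers to share the support at each $(t,M)$, so in general several processors lie in each $B_u$. When $q$'s behaviour changes, its broadcast at timeslot $s$ reaches \emph{every} processor by $s+2$, not just your designated $q^{(s+1)}$; every other $r\in B_{s+1}$ may then alter what it broadcasts at $s{+}1$, and both $p_0$ and $p_1$ see that at $s{+}2$. Making only $q^{(s+1)}$ faulty does nothing to suppress the relay through the remaining members of $B_{s+1}$. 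The same obstruction blocks the receipt-delay primitive you rely on: inserting a single $(q,r)$ for some $r\in B_{s+1}$ already alters $r$'s timeslot-$(s{+}1)$ broadcast and hence both of $p_0,p_1$'s received sets, so it is not a compliant elementary move by itself. You correctly flag this cascade as the crux, but a linear chain through one representative per timeslot cannot control it.

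What is missing is a recursion over \emph{all} later active processors, not one per timeslot. The paper defines mutually recursive operators $\mathtt{Remove}(p,\kappa)$ and $\mathtt{Add}(p,\kappa)$: to push $q\in B_s$'s broadcast one timeslot later, one iterates over every $p'\in(\bigcup_{s<t'<k}B_{t'})\cup\{p_0,p_1\}$, first recursively removing $p'$'s own broadcast (pushing it to timeslot $k$), then inserting the single receipt-delay $(q,p')$ while $p'$ is silenced, and then restoring $p'$ via $\mathtt{Add}$; once all such $p'$ are handled, the accumulated receipt-delays are exchanged for one broadcast-delay $(q,j)$. Repeating this $k-s$ times pushes $q$'s broadcast to timeslot $k$, its input can then be flipped invisibly, and the reverse sequence undoes the delays. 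The recursion is exactly what guarantees that whenever a single $(q,p')$ is toggled, the only relay that could expose the toggle has itself already been pushed outside the window. Your proposal needs this full recursive structure---treating every processor in each later $B_{t'}$, not one representative---to close the argument.
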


Before giving the formal proof of Lemma \ref{P2}, we first outline the basic idea. The proof is  similar to that described in \cite{dolev1983authenticated} for permissioned protocols, but is complicated by the fact that we consider broadcast messages, rather than private channels.

To explain the basic idea behind the proof, we consider the case $k=4$, noting that $t=2$ is the first timeslot at which any processor can possibly broadcast a non-empty set of (permitted) messages. We define $\zeta_0:=\emptyset$. Note that $\zeta_0$ is a $k$-run in which no processors are faulty, and in which $p_0$ and $p_1$ both output 0. 
The rough idea is that we now want to define a compliant sequence of runs, starting with $\zeta_0$, and in which we gradually get to change the inputs of all processors.

First, suppose that $p\in B_3$, and that we want to change $p$'s protocol input to $\{ 1_{\mathtt{U}^{\ast}} \}$. If we just do this directly, by defining $\zeta_1:= 
\zeta_0 \cup \{ (p) \}$, then the sequence $\zeta_{0},\zeta_1$ will not necessarily be compliant, because messages broadcast by $p$ at $t=3$ will be received by both $p_0$ and $p_1$ at $t=4$. To avoid this issue, we must first `remove' (the effect of) $p$'s broadcast at $t=3$ from the $k$-run, so as to produce a compliant sequence. To do this, we can consider  the sequence $ \zeta_1,\zeta_2,\zeta_3$, where $\zeta_1:=  \zeta_0 \cup \{ (p,p_0)\}$, 
$\zeta_2:=  \zeta_1 \cup \{ (p,p_1)\}$, and $\zeta_3^{\ast}:=  \zeta_0 \cup \{ (p,4) \}$. So, first we delay (by one) the timeslot at which  $p_0$ receives $p$'s messages. Then, we delay (by one) the timeslot at which  $p_1$ receives $p$'s messages. Then, finally, we remove those delays in the receipt of messages, but instead have $p$ delay the broadcast of all messages by a single timeslot. It's then clear than $\zeta_0,\zeta_1,\zeta_2, \zeta_3$ is a compliant sequence.  To finish changing $p$'s input and remove any faulty behaviour, we can  define $\zeta_4:= \zeta_3 \cup \{ (p) \}$, and then we can define $\zeta_5, \zeta_6, \zeta_7$ to be a compliant sequence which adds $p$'s broadcast back into the $k$-run, by carrying the previous `removal' in reverse. Then $\zeta_0,\dots \zeta_7$ is a compliant sequence changing $p$'s protocol input, and which ends with a $k$-run in which there is no faulty behaviour by any processor. To sum up, we carry out the following (which has more steps than really required, to fit more closely with the general case): 
\begin{enumerate} 
\item Delay by one timeslot the receipt of $p$'s messages by $p_0$.  
\item  Delay by one timeslot the receipt of $p$'s messages by  $p_1$. 
\item Remove the delays introduced in steps (1) and (2), and instead have $p$ delay the broadcast of all messages by one timeslot. So far,  we have `removed' the broadcasts of $p\in B_3$, by delaying them until $t_4$.
\item  Change $p$'s input, before reversing the previous sequence of changes so as to remove delays in $p$'s broadcasts, making $p$ non-faulty once again. 
\end{enumerate}

Next, suppose that $p\in B_2$, and that we want to change $p$'s protocol input to $\{ 1_{\mathtt{U}^{\ast}} \}$. The added complication now is that, as well delaying the receipt of $p$'s message by $p_0$ and $p_1$, we must also delay the receipt of $p$'s messages by processors in $B_3$. This is because any difference observed by $p'\in B_3$ by timeslot $t_3$ can be relayed to $p_0$ and $p_1$ by $t_4$. In order that the delay in the receipt of $p$'s messages by $p'\in B_3$ is not relayed simultaneously to $p_0$ and $p_1$, we must also remove the broadcasts of $p'\in B_3$. We can therefore proceed roughly as follows (the following approximate description is made precise later).  Let $p_0^{\ast},\dots,p^{\ast}_{\ell}$ be an enumeration of the processors in $B_3$. 
\begin{enumerate} 
\item `Remove' (delay) the broadcasts of $p_0^{\ast}$, just as we did for $p\in B_3$ above. 
\item Delay  by one timeslot the receipt of $p$'s messages by $p_0^{\ast}$. 
\item Reverse the previous removal (delay) of $p_0^{\ast}$'s broadcasts, so that $p_0^{\ast}$ is non-faulty.
\item Repeat (1)--(3) above, for each of $p_2^{\ast},\dots, p_{\ell}^{\ast}$ in turn. 
\item Delay by one timeslot the receipt of $p$'s messages by  $p_0$.  
\item  Delay by one timeslot the receipt of $p$'s messages by  $p_1$. 
\item Remove all delays introduced in (1)-(6), and instead have $p$ delay the broadcast of all messages by one timeslot. So far,  we have formed a compliant sequence ending with a $k$-run that delays the broadcast of $p$'s messages by one timeslot, until $t_3$. 
\item Repeating the same process allows us to delay the broadcast of $p$'s messages by another timeslot, until $t_4$. 
\item  Change $p$'s input, before reversing the previous sequence of changes so as to remove delays in $p$'s broadcasts, making $p$ non-faulty again. 
\end{enumerate} 

In the above, we have dealt with $p\in B_3$ and then $p\in B_2$, for the case that $k=4$. These ideas are easily extended to the general case, so as to form a compliant sequence which changes the protocol inputs for all processors. We now give the formal details. 

\vspace{0.1cm} 
\noindent \textbf{The formal proof of Lemma \ref{P2}.} The variable $\kappa$ is used to range over finite sequences of $k$-runs. We let $\kappa_1 \ast \kappa_2$ be the concatenation of the two sequences, and also extend this notation to singleton sequences in the obvious way, so that we may write $\kappa_1 \ast \zeta$, for example.  If $\kappa= \zeta_0,\dots,\zeta_{\ell}$, then we define $\zeta(\kappa):=\zeta_{\ell}$.
 For $t\in [1,k]$, and any $k$-run $\zeta$, we let $\zeta_{\geq t}$ be the set of all $q\in \zeta$ such that either $q=(p,p')$ or $q=(p,t')$, and such that $p\in \cup_{j\geq t} B_j$. We also define  $\zeta_{<t}$, by modifying the previous definition in the obvious way.

Ultimately, the plan is to start with the sequence $\kappa$ that has just a single element $\zeta_0:=\emptyset$. Then we'll repeatedly redefine $\kappa$, by extending it, until it is equal to the sequence required to establish the lemma.  To help in this process, we define the three functions $\mathtt{Remove}(p,\kappa)$,  $\mathtt{Add}(p,\kappa)$ and $\mathtt{Change}(p,\kappa)$ by backwards induction on $t$ such that $p\in B_t$. The rough idea is that $\mathtt{Remove}(p,\kappa)$ will remove the broadcasts of $p$ from the $k$-run (or, rather, postpone them until $t=k$). Then $\mathtt{Add}(p,\kappa)$ will reverse the process carried out by  $\mathtt{Remove}(p,\kappa)$. $\mathtt{Change}(p,\kappa)$  will produce a compliant sequence that changes the protocol input for $p$.  First of all, though, we define  $\mathtt{Remove}(p,\kappa)$ and  $\mathtt{Add}(p,\kappa)$ for $p\notin \cup_{t<k} B_t$. 

If  $p\notin \cup_{t<k} B_t$ then: 
\begin{enumerate} 
\item  $\mathtt{Remove}(p,\kappa):=\kappa$. 
\item $\mathtt{Add}(p,\kappa):=   \kappa$. 
\end{enumerate}

 Then  $\mathtt{Change}(p,\kappa)$ is defined for any processor $p$ by the following process:  

\vspace{0.2cm} 
\noindent $\mathtt{Change}(p,\kappa)$. 
\begin{enumerate}
\item $\kappa \leftarrow  \mathtt{Remove}(p,\kappa)$. 
\item $\kappa \leftarrow \kappa \ast \zeta$, where $\zeta:= \zeta(\kappa) \cup \{ (p) \}$. 
\item $\kappa \leftarrow \mathtt{Add}(p,\kappa)$. 
\item Return $\kappa$. 
\end{enumerate}

Now suppose that $p\in B_t$ for $t<k$. Suppose we have already defined $\mathtt{Remove}(p',\kappa)$,  and $\mathtt{Add}(p',\kappa)$   for $p'\in B_{t'}$ when $t<t'<k$, and suppose inductively that $(\diamond_3)_{p'},(\diamond_4)_{p'}$ and $(\diamond_5)_{p'}$ below all hold whenever $p'\in B_{t'}$ for $t<t'<k$ and $\kappa$ is compliant: 
\begin{enumerate} 
\item[$(\diamond_3)_p$] If  $\zeta(\kappa)_{\geq t'} =\emptyset$,  then $\kappa':=\mathtt{Remove}(p,\kappa)$ is compliant, with $\zeta(\kappa')_{\geq t'} = \{ (p,k) \}$ and $\zeta(\kappa')_{<t'}= \zeta(\kappa)_{<t'}$.
\item[$(\diamond_4)_p$] If  $\zeta(\kappa)_{\geq t'} =(p,k)$, then $\kappa':=\mathtt{Add}(p,\kappa)$ is compliant, with $\zeta(\kappa')_{\geq t'} = \emptyset$ and $\zeta(\kappa')_{<t'}= \zeta(\kappa)_{<t'}$.
\item[$(\diamond_5)_p$] If $\zeta(\kappa)_{\geq 0} =\emptyset$, then $\kappa':=\mathtt{Change}(p,\kappa)$ is compliant, with $\zeta(\kappa')= \zeta(\kappa) \cup \{ (p) \}$.
\end{enumerate} 
Let $p_0^{\ast},\dots, p_{\ell}^{\ast}$ be an enumeration of the processors $p'\in P_{>t}:= ( \cup_{t'\in (t,k)} B_{t'} ) \cup \{ p_0,p_1 \}$, in any order. 

\vspace{0.1cm} 

Then $\mathtt{Remove}(p,\kappa)$ is defined via the following process: 

\noindent $\mathtt{Remove}(p,\kappa)$. 

\begin{enumerate} 

 \item For $j=t+1$ to $k$ do: 
 
\item  \hspace{1cm} For $i=0$ to $\ell$ do: 
 
\item  \hspace{2cm}    $\kappa \leftarrow \mathtt{Remove}(p_i^{\ast},\kappa)$.

\item  \hspace{2cm}    $\kappa \leftarrow \kappa \ast \zeta$, where $\zeta:= \zeta(\kappa) \cup \{ (p,p_i^{\ast}) \}$.

\item  \hspace{2cm}    $\kappa \leftarrow \mathtt{Add}(p_i^{\ast},\kappa)$.

\item   \hspace{1cm}  $\kappa \leftarrow \kappa \ast \zeta$, where $\zeta:= (\zeta(\kappa) - \{ (p,p'): p\in P_{>t} \})- \{ (p,j-1) \}  \cup \{ (p,j) \}$. 

\item Return $\kappa$. 
\end{enumerate} 

\vspace{0.1cm} 

Then $\mathtt{Add}(p,\kappa)$ is defined via the following process: 

\noindent $\mathtt{Add}(p,\kappa)$. 

\begin{enumerate} 

 \item For $j=k-1$ to $t$ do: 
 
 \item   \hspace{1cm} If $j>t$ then $\kappa \leftarrow \kappa \ast \zeta$, where $\zeta:= \zeta(\kappa) - \{ (p,j+1) \} \cup \{ (p,j) \} \cup \{ (p,p'): p'\in P_{>t} \} $. 
 
  \item   \hspace{1cm} If $j=t$ then $\kappa \leftarrow \kappa \ast \zeta$, where $\zeta:= \zeta(\kappa) - \{ (p,j+1) \} \cup \{ (p,p'): p'\in P_{>t} \} $.

\item  \hspace{1cm} For $i=0$ to $\ell$ do: 
 
\item  \hspace{2cm}    $\kappa \leftarrow \mathtt{Remove}(p_i^{\ast},\kappa)$.

\item  \hspace{2cm}    $\kappa \leftarrow \kappa \ast \zeta$, where $\zeta:= \zeta(\kappa) - \{ (p,p_i^{\ast}) \}$.

\item  \hspace{2cm}    $\kappa \leftarrow \mathtt{Add}(p_i^{\ast},\kappa)$. 

\item Return $\kappa$.

\end{enumerate}

It then follows easily from the induction hypothesis, and by induction on the stages of the definition, that  $(\diamond_3)_{p'},(\diamond_4)_{p'}$ and $(\diamond_5)_{p'}$ all hold whenever $p'\in B_{t'}$ for $t\leq t'<k$. 

Finally, we can define the sequence $\kappa$ as required to establish the statement of the lemma, as follows. Initially we let $\zeta_0=\emptyset$, and we set $\kappa$ to be the single element sequence $\zeta_0$. Then we carry out the following process, where   $p_0^{\ast},\dots, p_{\ell}^{\ast}$ is an enumeration of the processors in $ P_{>1}$, and where $P$ is the set of all processors.  

\vspace{0.1cm} 

\noindent \textbf{Defining $\kappa=\zeta_0,\dots,\zeta_m$ as required by the lemma}. 

\begin{enumerate} 

 \item For $i=0$ to $\ell$ do: 
 
 \item   \hspace{1cm} $\kappa \leftarrow \mathtt{Change}(p_i^{\ast},\kappa)$.
 
  \item  $\kappa  \leftarrow \kappa \ast \zeta$, where $\zeta := \zeta(\kappa) \cup \{ (p): p\in P \}$.

\item Return $\kappa$. 
\end{enumerate} 
   It follows from repeated applications of $(\diamond_5)_p$ that the sequence $\kappa$ produced is compliant. For all processors $p$, we also have that $(p) \in \zeta(\kappa)$, as required.

\subsection{Appendix 7 -- Probabilisitic consensus in the synchronous setting} 
We consider the authenticated setting first. Given Theorem 1, the pertinent question becomes: 

\begin{question} \label{Q1} For which $q\in [0,1)$ do there exist permissionless protocols giving a probabilistic  solution to BB for a $q$-bounded adversary in the authenticated and synchronous setting?
\end{question}  

Longest chain protocols such as Bitcoin, Ouroboros \cite{kiayias2017ouroboros} and Snow White \cite{bentov2016snow} suffice to give a positive response to Question \ref{Q1}  for $q\in [0,\frac{1}{2})$, and with respect to both PoW and PoS protocols.  The case $q \in [\frac{1}{2},1)$ remains open for BB.\footnote{\cite{andrychowicz2015pow} and \cite{katz2014pseudonymous} describe approaches to this problem using PoW protocols which are not permissionless as defined here -- in those papers, a permissionless PoW protocol is used to establish an agreed set of public keys, which can then be used to carry out a permissioned protocol.}

Next, we consider consensus in the synchronous and unauthenticated setting. 
So far it might seem that, whatever the setting, permissioned protocols can be found to solve any problem that can be solved by a permissionless protocol.  
In fact, this is not so. In the original papers \cite{pease1980reaching,lamport1982byzantine} it was shown that there exists a permissioned protocol solving BB (and Byzantine Agreement)  in the unauthenticed and synchronous setting for a $q$-bounded adversary iff $q< 1/3$. This was for a framework in which processors communicate using private channels, however. Theorem \ref{nobroad} below shows that, for our framework without private channels,  there does not exist a permissioned protocol solving the  Byzantine Generals problem in the unauthenticed and synchronous setting for a $q$-bounded adversary if $q>0$. Since PoW protocols can be defined solving this problem when $q<\frac{1}{2}$, this demonstrates a setting in which PoW protocols can solve problems that cannot be solved by any permissioned protocol.

A version of Theorem \ref{nobroad} below was already proved (for a different framework) in \cite{pass2017rethinking}. 
We include a proof here\footnote{To describe probabilisitic protocols in the permissioned setting, we allow that state transitions may be probabilisitic.} because it is significantly simpler than the proof in \cite{pass2017rethinking}, and because the version we give here is easily modified to give a proof of Theorem \ref{q>1.5}. 

\begin{theorem} \label{nobroad} \cite{pass2017rethinking} Consider the synchronous and unauthenticated setting (with the framework described in Section \ref{framework}, whereby processors broadcast, rather than communicate by private channels). If $q\in (0,1]$, then there is no permissioned protocol giving a probabilistic solution to BB for a $q$-bounded adversary. 
\end{theorem}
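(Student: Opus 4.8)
The plan is to assume for contradiction that some protocol $\Pi$ gives a probabilistic solution to BB for a $q$-bounded adversary in the synchronous, unauthenticated, broadcast setting, and then to construct a single execution that violates the agreement requirement. The leverage is that, with no signature scheme and no sender-authenticated channels, a message is just a string with no attribution, so a single faulty processor can broadcast, round by round, the full transcript of arbitrarily many \emph{simulated honest processors} (``phantoms''), and no honest processor can tell phantom traffic from genuine traffic.

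Fix $\varepsilon=1/4$ and pick the number of processors $n$ large enough that $\lfloor qn\rfloor\ge 1$; every configuration below corrupts only one processor, so all are $q$-bounded. I would first consider the execution $\mathcal{E}_{\mathrm{mix}}$ in which the general is \emph{dishonest}, handing input $\{0\}$ to $p_1$, input $\{1\}$ to $p_2$, and the confused input $\{0,1\}$ to $p_3,\dots,p_{n-1}$, with $p_n$ faulty. The faulty $p_n$ internally runs $n-2$ honest copies of $\mathtt{S}$ with input $\{0\}$ and $n-2$ with input $\{1\}$, feeds each such phantom, every round, the set of all messages broadcast so far in $\mathcal{E}_{\mathrm{mix}}$ minus that phantom's own output, and broadcasts the union of the phantoms' round-by-round outputs. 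Against $\mathcal{E}_{\mathrm{mix}}$ I compare two ``clean'' executions: $\mathcal{E}_0$, with an \emph{honest} general with input $0$, whose honest processors are $p_1$ together with $n-2$ real copies playing the input-$0$ phantoms, and whose single faulty processor broadcasts the combined transcripts of $p_2$, of $p_3,\dots,p_{n-1}$, and of the input-$1$ phantoms, all copied from $\mathcal{E}_{\mathrm{mix}}$; and, symmetrically, $\mathcal{E}_1$, with an honest general with input $1$.

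The core is a bookkeeping check: after assigning identifiers so that the identifier set is the same $\{\mathtt{U}_1,\dots,\mathtt{U}_n\}$ in all three executions, an induction on rounds shows that the set of messages received by $p_1$ at every round is \emph{identical} in $\mathcal{E}_{\mathrm{mix}}$ and $\mathcal{E}_0$ (in both, $p_1$ hears $p_2$'s stream, the streams of $p_3,\dots,p_{n-1}$, and the streams of all $2(n-2)$ phantoms), and $p_1$'s input is the same; likewise $p_2$ sees the same thing in $\mathcal{E}_{\mathrm{mix}}$ and $\mathcal{E}_1$. Hence $p_1$ behaves identically in $\mathcal{E}_{\mathrm{mix}}$ and $\mathcal{E}_0$, and $p_2$ behaves identically in $\mathcal{E}_{\mathrm{mix}}$ and $\mathcal{E}_1$. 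Since $\mathcal{E}_0$ has an honest general with input $0$ and only one faulty processor, \emph{validity} forces $p_1$ to output $0$ with probability $>1-\varepsilon$, and similarly $p_2$ outputs $1$ in $\mathcal{E}_1$ with probability $>1-\varepsilon$; transporting this to $\mathcal{E}_{\mathrm{mix}}$ and union-bounding, in $\mathcal{E}_{\mathrm{mix}}$ the event ``$p_1$ outputs $0$ and $p_2$ outputs $1$'' has probability $>1-2\varepsilon=1/2$. But $p_1$ and $p_2$ are both honest in $\mathcal{E}_{\mathrm{mix}}$, which has one faulty processor, so the \emph{agreement} requirement, which should hold with probability $>1-\varepsilon=3/4$, is violated with probability $>1/2$ --- a contradiction. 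For probabilistic $\Pi$ the ``transport'' step must be justified by a coupling: run all three executions on a common probability space in which the private coins of $p_1$, of $p_2$, of each honest copy and of each phantom are shared appropriately (and the faulty processors' copied transcripts are taken from that same sample of $\mathcal{E}_{\mathrm{mix}}$), so that the relevant views coincide as random variables rather than merely in distribution, and one checks that the coupled $\mathcal{E}_0,\mathcal{E}_1$ still have the correct (legitimate-adversary) distributions.

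I expect the main obstacle to be exactly this construction-plus-coupling: defining the phantom simulation so that phantom inputs are consistent and ``what a phantom hears'' is well defined by induction on rounds (there is a mild circularity, since $p_n$'s round-$k$ broadcast depends on phantom states that depend only on messages through round $k-1$), and checking that the identifier assignments together with the possible multiplicities of identical strings across the real and phantom streams do not spoil the claimed equality of received message-sets --- it is precisely the absence of PKI that makes these collisions harmless. Verifying that $\mathcal{E}_0$ and $\mathcal{E}_1$ are legal synchronous executions of $\Pi$, in which a lone faulty processor can generate all the traffic attributed to it and message-delivery constraints are met, is then routine. The same template, with the single faulty processor replaced by an adversary holding a strictly greater than one-half share of the resource (so that it can out-simulate the honest participants), is what should adapt to give the analogous permissionless impossibility result.
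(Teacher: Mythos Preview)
Your approach is correct and rests on the same core observation as the paper's proof: in the unauthenticated broadcast model a single faulty processor can inject the full transcript of arbitrarily many simulated honest participants, so an honest processor cannot tell real traffic from phantom traffic. The paper's execution of this idea is, however, considerably leaner than your three-execution hybrid. Instead of building $\mathcal{E}_{\mathrm{mix}},\mathcal{E}_0,\mathcal{E}_1$ with the confused-input processors $p_3,\dots,p_{n-1}$ and $2(n-2)$ phantoms, the paper fixes an honest $p_i$ and has the single faulty processor $p_0$ simulate \emph{all} the other honest processors $p_j$ ($j\neq i$) on the bit-flipped inputs $\bar s$. The key point is that $p_i$ then sees, for every $j\neq i$, one copy running on input $s_j$ (the real $p_j$) and one copy running on $1-s_j$ (the simulated one), both fed the same message stream; since $\{s_j,1-s_j\}=\{0,1\}$ regardless of $s$, this view is identical across all system inputs compatible with $\mathtt{in}_{p_i}$. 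It follows in one stroke that $p_i$'s output depends only on its own input, and taking $p_1,p_2$ with different singleton inputs yields the contradiction. This sidesteps the $\{0,1\}$-input layer, the explicit coupling, and the round-by-round circularity bookkeeping that you correctly identify as the main work in your construction; what your version buys is a more concrete ``hybrid'' picture, which some readers may find easier to visualise. Both arguments adapt to the permissionless $q\ge\frac12$ result in the way you indicate in your final paragraph.
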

\begin{proof}
 Towards a contradiction, suppose that such a permissioned protocol exists. Let the set of processors be $P=\{ p_0, p_1,\dots, p_n \}$,  suppose that $n\geq 2$ and that the adversary controls $p_0$. Fix a set of parameter inputs and a timing rule consistent with those inputs, such that the security parameter $\varepsilon$ is small (see Appendix 3 for the definition of a `timing rule').   We say that two runs are \emph{indistinguishable for $p_i$} if the distribution on $p_i$'s state and the messages it receives at each timeslot is identical in both runs.  By a \emph{system input} $s$, we mean a choice of protocol input for each processor in $P$. We restrict attention to system inputs in which each protocol input $\mathtt{in}_p$ is either  $\{ 0_{\mathtt{U}^{\ast}} \}$ or  $\{ 1_{\mathtt{U}^{\ast}} \}$ (but never  $\{ 0_{\mathtt{U}^{\ast}}, 1_{\mathtt{U}^{\ast}} \}$ ).\footnote{Note that it still makes sense to consider inputs of this form in the unauthenticated setting, but now any participant will be able to send messages that look like they are signed by the general.} 
For any system input $s$, we let $\bar s$ be the system input in which each protocol input is reversed, i.e. if $\mathtt{in}_p = \{ z_{\mathtt{U}{\ast}} \}$ in $s$, then $\mathtt{in}_p= \{ (1-z)_{\mathtt{U}^{\ast}} \}$ in $\bar s$.  For each system input $s$ and each $i\in [1,n]$, we consider a strategy (i.e. state transition diagram) for the adversary $\mathtt{adv}(i,s)$, in which the adversary ignores their own protocol input, and instead simulates all processors in $\{ p_1,\dots,p_n \}$ except $p_i$, with the protocol input specified by $\bar s$, i.e.\ the adversary follows the state transition diagram for each $p_j\in  \{ p_1,\dots,p_n \}$ other than $p_i$, and broadcasts all of the messages it is instructed to broadcast. 

Fix arbitrary $i\in [1,\dots,n]$ and a protocol input $\mathtt{in}_{p_i}$.   For any two system inputs $s$ and $s'$ compatible with $\mathtt{in}_{p_i}$ (i.e.\ which gives $p_i$ protocol input $\mathtt{in}_{p_i}$), the two runs produced when the adversary follows the strategies  $\mathtt{adv}(i,s)$ and $\mathtt{adv}(i,s')$ respectively are then indistinguishable for $p_i$. When $s$ specifies a protocol input of $\{ z_{\mathtt{U}{^\ast}} \}$ for all processors, $p_i$ must output $z$ with probability $>1- \varepsilon$. We conclude that, whatever the system input,  if $\mathtt{in}_{p_i}= \{ 0_{\mathtt{U}^{\ast}} \}$, then $p_i$ must output 0 with probability $>1-\varepsilon$, and if 
 $\mathtt{in}_{p_i}= \{ 1_{\mathtt{U}^{\ast}} \}$, then $p_i$ must output 1 with probability $>1-\varepsilon$. This is true for all $i\in [1,\dots, n ]$, however, meaning that the protocol fails when $\varepsilon$ is small and when $p_1$ and $p_2$ receive inputs $\{ 0_{\mathtt{U}^{\ast}} \}$ and $\{ 1_{\mathtt{U}^{\ast}} \}$ respectively. 
\end{proof} 

What happens in the permissionless setting? The proof of Theorem \ref{nobroad} is easily modified to show that that it is not possible to deal with the case $q\geq  \frac{1}{2}$, giving Theorem 2 (restated below). Superficially, the statement of the theorem  sounds similar to Theorem 1 from \cite{garay2020sok}, but that paper deals with the Byzantine Agreement Problem, for which is it is easy to see that it is never possible to deal with  $q\geq \frac{1}{2}$.

\vspace{0.2cm} 
\noindent \textbf{Theorem 2}. \emph{Consider the synchronous and unauthenticated setting. If $q\geq \frac{1}{2}$, then there is no permissionless protocol giving a probabilistic solution to BB for a $q$-bounded adversary. }

\begin{proof} 
We follow the proof of Theorem \ref{nobroad} very closely. Towards a contradiction, suppose that such a permissionless protocol exists. We give a proof which considers a set of three processors, but which is easily adapted to deal with any number of processors $n\geq 3$. Let the set of processors be $P=\{ p_0,p_1,p_2 \}$,  and suppose that the adversary controls $p_0$. Fix a set of parameter inputs and a timing rule consistent with those inputs (see Appendix 3 for the definition of a `timing rule'), such that the security parameter $\varepsilon$ is small, and such that $\mathcal{R}$ allocates $p_0$ and $p_1$ the same constant resource balance for all inputs, and allocates $p_2$ resource balance 0 for all inputs.  Recall the definition of a protocol instance from Section \ref{adv}. We say that two protocol instances are \emph{indistinguishable for $p$} if both of the following hold:  (a) Processor $p$ receives the same protocol inputs for both instances, and; (b)  The distributions on the pairs $(M,M^{\ast})$ received by $p$ at each timeslot are the same for the two instances, i.e.\ for any (possibly infinite) sequence $(M_1,M^{\ast}_1), (M_2,M^{\ast}_2),\dots$ the probability that, for all $t\geq 1$, $p$ receives $(M_t,M^{\ast}_t)$ at timeslot $t$, is the same for both protocol instances.   As in the proof of Theorem \ref{nobroad}, by a \emph{system input} $s$ we mean a choice of protocol input for each processor in $P$. Again, we restrict attention to system inputs in which each protocol input $\mathtt{in}_p$ is either  $\{ 0_{\mathtt{U}^{\ast}} \}$ or  $\{ 1_{\mathtt{U}^{\ast}} \}$ (but never  $\{ 0_{\mathtt{U}^{\ast}}, 1_{\mathtt{U}^{\ast}} \}$ ). 
For any system input $s$, we let $\bar s$ be the system input in which each protocol input is reversed, i.e. if $\mathtt{in}_p = \{ z_{\mathtt{U}{\ast}} \}$ in $s$, then $\mathtt{in}_p= \{ (1-z)_{\mathtt{U}^{\ast}} \}$ in $\bar s$.  For each system input $s$  we consider a strategy (i.e. state transition diagram) for the adversary $\mathtt{adv}(s)$, in which the adversary ignores their own protocol input, and instead simulates $p_1$ with the protocol input specified by $\bar s$, i.e.\ the adversary follows the state transition diagram for $p_1$, and broadcasts all of the messages it is instructed to broadcast. 

Let us say a system input is compatible with $\mathtt{in}_{p_2}$ if it gives $p_2$ the protocol input $\mathtt{in}_{p_2}$. For any two system inputs $s$ and $s'$ compatible with a fixed value $\mathtt{in}_{p_2}$, the the protocol instances produced when the adversary follows the strategies  $\mathtt{adv}(s)$ and $\mathtt{adv}(s')$ respectively are then indistinguishable for $p_2$. When $s$ specifies a protocol input of $\{ z_{\mathtt{U}{^\ast}} \}$ for all processors, $p_2$ must output $z$ with probability $>1-\varepsilon$. We conclude that, whatever the system input,  if $\mathtt{in}_{p_2}= \{ 0_{\mathtt{U}^{\ast}} \}$, then $p_2$ must output 0 with probability  $>1-\varepsilon$, and if 
 $\mathtt{in}_{p_2}= \{ 1_{\mathtt{U}^{\ast}} \}$, then $p_2$ must output 1 with probability $>1-\varepsilon$. Note also, that any two  protocol instances that differ only in the protocol input for $p_2$ are indistinguishable for $p_1$. So, $p_1$ also satisfies the property that, whatever the system input,  if $\mathtt{in}_{p_1}= \{ 0_{\mathtt{U}^{\ast}} \}$, then $p_1$ must output 0 with probability  $>1-\varepsilon$, and if 
 $\mathtt{in}_{p_1}= \{ 1_{\mathtt{U}^{\ast}} \}$, then $p_1$ must output 1 with probability $>1-\varepsilon$. The protocol thus fails when $p_1$ and $p_2$ receive inputs $\{ 0_{\mathtt{U}^{\ast}} \}$ and $\{ 1_{\mathtt{U}^{\ast}} \}$ respectively. 
\end{proof} 

We have required that PoS protocols operate in the authenticated setting. So, in the opposite direction to Theorem \ref{q>1.5}, this leaves us to consider what can be done with PoW protocols.  As shown in \cite{garay2018bitcoin}, Bitcoin is a PoW protocol which solves BB in the unauthenticated setting for  all $q\in [0,\frac{1}{2})$.

\subsection{Appendix 8 -- The proof of Theorem 4}

 The idea behind the proof 
can be summed up as follows. Recall the definition of a protocol instance from Section \ref{adv}. We consider protocol instances  in which there are at least two processors $p_0$ and $p_1$, both of which
  are non-faulty, and with identifiers $\mathtt{U}_0$ and
  $\mathtt{U}_1$ respectively. Suppose that, in a certain
  protocol instance, 
  $\mathtt{U}_0$ and $\mathtt{U}_1$ both have the same constant and
  non-zero resource balance for all inputs, and that all other identifiers have resource
  balance zero for all $t$ and $M$. According to the 
  `no balance, no voice' assumptions of Section \ref{RP} (that the permitter oracle's response to any request $(t',M,\emptyset)$ must be $M^{\ast}=\emptyset$ whenever $\mathcal{R}(\mathtt{U}_p,t',M)=0$), this means that $p_0$ and
  $p_1$ will be the only processors that are able to
  broadcast messages. For as long as 
  messages broadcast by each $p_i$ are prevented from being received by
  $p_{1-i}$ ($i\in \{ 0,1 \}$), however, the protocol instance will be
  indistinguishable for $p_i$ from one
  in which only $\mathtt{U}_i$ has the same constant and non-zero
  resource balance. After some finite time $p_0$ and $p_1$ must therefore give outputs, which will be  incorrect for certain protocol inputs. 

To describe the argument in more detail, let $\mathtt{U}_0$ and  $\mathtt{U}_1$ be identifiers allocated to the non-faulty processors $p_0$ and $p_1$ respectively. We consider three different resource pools:

\begin{enumerate} 
\item[$\mathcal{R}_0:$]  For all inputs $t$ and $M$, $\mathtt{U}_0$ and $\mathtt{U}_1$ are given the same constant value $I>0$, while all other identifiers are assigned the constant value 0. 
\item[$\mathcal{R}_1:$]   For all inputs $t$ and $M$, $\mathtt{U}_0$ is given the same constant value $I>0$, while all other identifiers are assigned the constant value 0.
\item[$\mathcal{R}_2:$]   For all inputs $t$ and $M$, $\mathtt{U}_1$ is given the same constant value $I>0$, while all other identifiers are assigned the constant value 0.
\end{enumerate} 
We also consider three different instances of the protocol $\mathtt{I}_0,\mathtt{I}_1$ and $\mathtt{I}_2$. In all three instances, the security parameter $\varepsilon$ is given the same small value,  and for all $i\in \{ 0,1 \}$,  $p_i$ has protocol input $\{ i_{\mathtt{U}^{\ast}} \}$. More generally, all three instances have identical parameter and protocol inputs, except for the differences detailed below:  

\begin{enumerate} 
\item[$\mathtt{I}_0:$] Here $\mathcal{R}:= \mathcal{R}_0$. For $i\in \{ 0,1 \}$, messages broadcast by $p_i$ are not received by  $p_{1-i}$ until after the (undetermined) stabilisation time $T$. 
\item[$\mathtt{I}_1:$]  Here $\mathcal{R}:= \mathcal{R}_1$, and the choice of timing rule is arbitrary. 
\item[$\mathtt{I}_2:$]  Here $\mathcal{R}:= \mathcal{R}_2$,  and the choice of timing rule is arbitrary.
\end{enumerate} 

 For any timeslot $t$, we say that two protocol instances are \emph{indistinguishable for $p$ until $t$} if both of the following hold:  (a) Processor $p$ receives the same protocol inputs for both instances, and; (b)  The distributions on the pairs $(M,M^{\ast})$ received by $p$ at each timeslot $\leq t$ are the same for the two instances, i.e.\ for any sequence $(M_1,M^{\ast}_1),\dots,(M_t,M^{\ast}_t)$, the probability that, for all $t'\leq t$, $p$ receives $(M_{t'},M^{\ast}_{t'})$ at timeslot $t'$, is the same for both protocol instances.

According to the 
  `no balance, no voice' assumptions of Section \ref{RP}, it follows that
only $p_0$ and $p_1$ will be able to broadcast
messages in any run corresponding to any of these three instances. Our framework also stipulates that the response of the permitter to a request from $p$ at timeslot $t$ of the form
$(M,A)$ (or $(t',M,A)$) is a probabilistic function of the determined variables,  $(M,A)$ (or $(t',M,A)$), and of
$\mathcal{R}(\mathtt{U}_p,t,M)$ (or $\mathcal{R}(\mathtt{U}_p,t',M)$), and also $\mathtt{U}_p$ if we are working in the authenticated setting.
 It therefore follows
by induction on timeslots $\leq T$ that, because the resource pool is
undetermined:
\begin{enumerate} 
\item[$(\dagger)$] For each $i\in \{ 0,1 \}$ and all $t\leq T$, $\mathtt{I}_0$ and $\mathtt{I}_{1+i}$ are indistinguishable for $p_i$ until $t$. 
\end{enumerate} 
If $T$ is chosen sufficiently large, it follows that we can find $t_0<T$ satisfying the following condition: For both $\mathtt{I}_{1+i}$ ($i\in \{ 0,1 \}$), it holds
with probability $>1-\varepsilon$ that $p_i$ outputs $i$ before timeslot $t_0$.   By $(\dagger)$, it therefore holds
for $\mathtt{I}_0$  that, with
probability $>1-2\varepsilon$,  $p_0$ outputs 0 and $p_1$ outputs  1 before $t_0$. This gives the required contradiction, so long as $\varepsilon <\frac{1}{3}$.


\begin{thebibliography}{ABdSFG08}

\bibitem[ABdSFG08]{alchieri2008byzantine}
Eduardo~AP Alchieri, Alysson~Neves Bessani, Joni da~Silva~Fraga, and
  Fab{\'\i}ola Greve.
\newblock Byzantine consensus with unknown participants.
\newblock In {\em International Conference On Principles Of Distributed
  Systems}, pages 22--40. Springer, 2008.

\bibitem[AD15]{andrychowicz2015pow}
Marcin Andrychowicz and Stefan Dziembowski.
\newblock Pow-based distributed cryptography with no trusted setup.
\newblock In {\em Annual Cryptology Conference}, pages 379--399. Springer,
  2015.

\bibitem[AM{\etalchar{+}}17]{abraham2017blockchain}
Ittai Abraham, Dahlia Malkhi, et~al.
\newblock The blockchain consensus layer and bft.
\newblock {\em Bulletin of EATCS}, 3(123), 2017.

\bibitem[BCNPW19]{brown2019formal}
Jonah Brown-Cohen, Arvind Narayanan, Alexandros Psomas, and S~Matthew Weinberg.
\newblock Formal barriers to longest-chain proof-of-stake protocols.
\newblock In {\em Proceedings of the 2019 ACM Conference on Economics and
  Computation}, pages 459--473, 2019.

\bibitem[Bor96]{borcherding1996levels}
Malte Borcherding.
\newblock Levels of authentication in distributed agreement.
\newblock In {\em International Workshop on Distributed Algorithms}, pages
  40--55. Springer, 1996.

\bibitem[BPS16]{bentov2016snow}
Iddo Bentov, Rafael Pass, and Elaine Shi.
\newblock Snow white: Provably secure proofs of stake.
\newblock {\em IACR Cryptology ePrint Archive}, 2016(919), 2016.

\bibitem[Bre00]{brewer2000towards}
Eric~A Brewer.
\newblock Towards robust distributed systems.
\newblock In {\em PODC}, volume~7, pages 343477--343502. Portland, OR, 2000.

\bibitem[But18]{buterin2018ethereum}
Vitalik Buterin.
\newblock What is ethereum?
\newblock {\em Ethereum Official webpage. Available: http://www. ethdocs.
  org/en/latest/introduction/what-is-ethereum. html. Accessed}, 14, 2018.

\bibitem[Can01]{canetti2001universally}
Ran Canetti.
\newblock Universally composable security: A new paradigm for cryptographic
  protocols.
\newblock In {\em Proceedings 42nd IEEE Symposium on Foundations of Computer
  Science}, pages 136--145. IEEE, 2001.

\bibitem[CGMV18]{chen2018algorand}
Jing Chen, Sergey Gorbunov, Silvio Micali, and Georgios Vlachos.
\newblock Algorand agreement: Super fast and partition resilient byzantine
  agreement.
\newblock {\em IACR Cryptol. ePrint Arch.}, 2018:377, 2018.

\bibitem[CM16]{chen2016algorand}
Jing Chen and Silvio Micali.
\newblock Algorand.
\newblock {\em arXiv preprint arXiv:1607.01341}, 2016.

\bibitem[CSS04]{cavin2004consensus}
David Cavin, Yoav Sasson, and Andr{\'e} Schiper.
\newblock Consensus with unknown participants or fundamental self-organization.
\newblock In {\em International Conference on Ad-Hoc Networks and Wireless},
  pages 135--148. Springer, 2004.

\bibitem[DLS88]{DLS88}
Cynthia Dwork, Nancy~A. Lynch, and Larry Stockmeyer.
\newblock Consensus in the presence of partial synchrony.
\newblock {\em Journal of the ACM}, 35(2):288--323, 1988.

\bibitem[DS83]{dolev1983authenticated}
Danny Dolev and H.~Raymond Strong.
\newblock Authenticated algorithms for byzantine agreement.
\newblock {\em SIAM Journal on Computing}, 12(4):656--666, 1983.

\bibitem[GK20]{garay2020sok}
Juan Garay and Aggelos Kiayias.
\newblock Sok: A consensus taxonomy in the blockchain era.
\newblock In {\em Cryptographers? Track at the RSA Conference}, pages 284--318.
  Springer, 2020.

\bibitem[GKL18]{garay2018bitcoin}
Juan~A Garay, Aggelos Kiayias, and Nikos Leonardos.
\newblock The bitcoin backbone protocol: Analysis and applications.
\newblock 2018.

\bibitem[GKO{\etalchar{+}}20]{garay2020resource}
Juan Garay, Aggelos Kiayias, Rafail~M Ostrovsky, Giorgos Panagiotakos, and
  Vassilis Zikas.
\newblock Resource-restricted cryptography: Revisiting mpc bounds in the
  proof-of-work era.
\newblock {\em Advances in Cryptology--EUROCRYPT 2020}, 12106:129, 2020.

\bibitem[GKR18]{gavzi2018stake}
Peter Ga{\v{z}}i, Aggelos Kiayias, and Alexander Russell.
\newblock Stake-bleeding attacks on proof-of-stake blockchains.
\newblock In {\em 2018 Crypto Valley Conference on Blockchain Technology
  (CVCBT)}, pages 85--92. IEEE, 2018.

\bibitem[GL02]{gilbert2002brewer}
Seth Gilbert and Nancy Lynch.
\newblock Brewer's conjecture and the feasibility of consistent, available,
  partition-tolerant web services.
\newblock {\em Acm Sigact News}, 33(2):51--59, 2002.

\bibitem[GPS19]{guo2019synchronous}
Yue Guo, Rafael Pass, and Elaine Shi.
\newblock Synchronous, with a chance of partition tolerance.
\newblock In {\em Annual International Cryptology Conference}, pages 499--529.
  Springer, 2019.

\bibitem[KMS14]{katz2014pseudonymous}
Jonathan Katz, Andrew Miller, and Elaine Shi.
\newblock Pseudonymous broadcast and secure computation from cryptographic
  puzzles.
\newblock Technical report, Cryptology ePrint Archive, Report 2014/857, 2014.
  http://eprint. iacr. org~?, 2014.

\bibitem[KRDO17]{kiayias2017ouroboros}
Aggelos Kiayias, Alexander Russell, Bernardo David, and Roman Oliynykov.
\newblock Ouroboros: A provably secure proof-of-stake blockchain protocol.
\newblock In {\em Annual International Cryptology Conference}, pages 357--388.
  Springer, 2017.

\bibitem[L{\etalchar{+}}01]{lamport2001paxos}
Leslie Lamport et~al.
\newblock Paxos made simple.
\newblock {\em ACM Sigact News}, 32(4):18--25, 2001.

\bibitem[LSP82]{lamport1982byzantine}
Leslie Lamport, Robert Shostak, and Marshall Pease.
\newblock The byzantine generals problem.
\newblock {\em ACM Transactions on Programming Languages and Systems (TOPLAS)},
  4(3):382--401, 1982.

\bibitem[N{\etalchar{+}}08]{nakamoto2008bitcoin}
Satoshi Nakamoto et~al.
\newblock Bitcoin: A peer-to-peer electronic cash system.(2008), 2008.

\bibitem[Oku05]{okun2005distributed}
Michael Okun.
\newblock {\em Distributed computing among unacquainted processors in the
  presence of Byzantine failures}.
\newblock Hebrew University of Jerusalem, 2005.

\bibitem[PS17]{pass2017rethinking}
Rafael Pass and Elaine Shi.
\newblock Rethinking large-scale consensus.
\newblock In {\em 2017 IEEE 30th Computer Security Foundations Symposium
  (CSF)}, pages 115--129. IEEE, 2017.

\bibitem[PSas16]{WHGSW16}
Rafael Pass, Lior Seeman, and abhi shelat.
\newblock Analysis of the blockchain protocol in asynchronous networks, 2016.
\newblock eprint.iacr.org/2016/454.

\bibitem[PSL80]{pease1980reaching}
Marshall Pease, Robert Shostak, and Leslie Lamport.
\newblock Reaching agreement in the presence of faults.
\newblock {\em Journal of the ACM (JACM)}, 27(2):228--234, 1980.

\bibitem[RD16]{ren2016proof}
Ling Ren and Srinivas Devadas.
\newblock Proof of space from stacked expanders.
\newblock In {\em Theory of Cryptography Conference}, pages 262--285. Springer,
  2016.

\bibitem[Ter20]{terner2020permissionless}
Benjamin Terner.
\newblock Permissionless consensus in the resource model.
\newblock {\em IACR Cryptol. ePrint Arch.}, 2020:355, 2020.

\bibitem[ZMR18]{zamani2018rapidchain}
Mahdi Zamani, Mahnush Movahedi, and Mariana Raykova.
\newblock Rapidchain: Scaling blockchain via full sharding.
\newblock In {\em Proceedings of the 2018 ACM SIGSAC Conference on Computer and
  Communications Security}, pages 931--948, 2018.

\end{thebibliography}
\end{document}